\newcommand{\ignore}[1]{}
\newcommand*{\Scale}[2][4]{\scalebox{#1}{$#2$}}%
\newtheorem{theorem}{Theorem}
\newtheorem{lemma}{Lemma}
\newtheorem{coro}{Corollary}
\newtheorem{proposition}{Proposition}
\newtheorem{claim}{Claim}
\newtheorem{definition}{Definition}
\newcommand{\OPT}{\mbox{\rm OPT}}
\newcommand{\EE}{\ensuremath{\mathbb{E}}}
\newcommand{\head}[1]
{\markright{\hbox to 0pt{\vtop to 0pt{\hbox{}\vskip 3mm \hrule
                width  \textwidth \vss} \hss}{\sc #1}}}
\title{{\bf Election with Bribed Voter Uncertainty: Hardness and Approximation Algorithm}}
\author{Lin Chen$^1$, Lei Xu$^2$, Shouhuai Xu$^3$, Zhimin Gao$^1$, Weidong Shi$^1$\\ 
	$^1$ Department of Computer Science, University of Houston, TX, USA\\ 
	$^2$ Conduent Labs, NC, USA \\ 
	$^3$ Department of Computer Science, University of Texas at San Antonio, TX, USA\\
	chenlin198662@gmail.com, xuleimath@gmail.com, shouhuai.xu@utsa.edu, gao@kell.vin, larryshi@ymail.com}
\begin{document}
\maketitle

\begin{abstract}
Bribery in election (or computational social choice in general) is an important problem that has received a considerable amount of attention. 
In the classic bribery problem, the briber (or attacker) bribes some voters in attempting to make the briber's designated candidate win an election.
In this paper, we introduce a novel variant of the bribery problem, ``Election with \underline{B}ribed \underline{V}oter \underline{U}ncertainty'' or BVU for short, accommodating the uncertainty that the vote of a bribed voter may or may not be counted.
This uncertainty occurs either because a bribed voter may not cast its vote in fear of being caught, or because a bribed voter is indeed caught and therefore its vote is discarded. 
As a first step towards ultimately understanding and addressing this important problem,
we show that it does not admit any {\em multiplicative} $O(1)$-approximation algorithm 
modulo standard complexity assumptions.
We further show that there is an approximation algorithm that returns a solution with an {\em additive}-$\epsilon$ error in FPT time for any fixed $\epsilon$. 
\end{abstract}

%
%


\maketitle

\section{Introduction}
In multiagent systems, election (or voting) is an important mechanism for collective decision-making. This importance has led to extensive investigations of various aspects of election. Indeed, the field of {\em computational social choice} investigates algorithmic and computational complexity aspects of this mechanism (see, e.g., the book by \cite{brandt2016handbook}).
In this paper, we focus on two important aspects of election that have received an extensive amount of attention but are still not fully understood: {\em uncertainty} and {\em bribery}.

\smallskip
\noindent\textbf{Uncertainty.}
Most studies in election investigated {\em deterministic} models and did not consider {\em uncertainty}, which is however often encountered in real-world scenarios.
There are two exceptions.
One exception is the investigation of uncertainty from the perspective of the {\em possible winner}. In this perspective, the input is {\em incomplete} and the problem is to determine if it is possible to extend the incomplete input to make a designated candidate win or lose. The uncertainty can be incurred by voters' incomplete preference lists, as shown by \cite{konczak2005voting,xia2011determining,betzler2010towards,baumeister2012taking,betzler2009multivariate}. The uncertainty can also be incurred by an incomplete set of candidates (e.g., additional candidates may be added), as shown by \cite{chevaleyre2010possible,xia2011possible,baumeister2011computational}.
The other exception is the investigation of uncertainty incurred by complete but {\em probabilistic} inputs. For example, \cite{wojtas2012possible} introduced an election model in which voters {or candidates} may have some {\em probabilities of no-show},
either because the communication network is not reliable or because voters inherently behave as such.

\smallskip
\noindent\textbf{Bribery.}
\cite{faliszewski2009hard} introduced the {\em bribery} problem in which a briber (or attacker) attempts to make a designated candidate win by paying a (monetary) bribe to some voters. Once bribed, a voter will vote for the candidate designated by the attacker. This problem has received a considerable amount of attention; see, e.g., \citet{lin2010complexity,brelsford2008approximability,xia2012computing,faliszewski2015complexity,faliszewski2011multimode,faliszewski2009llull,parkes2012complexity}. Most studies in this context consider {\em deterministic} models,
but researchers have started investigating the issue of {\em uncertainty} in this context as well. For example, \cite{erdelyi2014bribery} considered the bribery problem with uncertain voting rules; \cite{mattei2015complexity} considered the bribery problem
with uncertain information, \cite{erdelyi2009complexity} considered uncertainty in the lobbying problem, which is related to, but different from, the bribery problem.

\smallskip
\noindent\textbf{New problem: Election with \underline{B}ribed \underline{V}oter \underline{U}ncertainty (BVU).}
We observe that in the context of the bribery problem, there is an inherent uncertainty that has not been considered in the literature:
The vote of a bribed voter may or may not be counted,
either because a bribed voter may choose not to cast its vote in fear of being caught,
or because a bribed voter is indeed caught and therefore its vote is discarded.
In this setting, each voter is associated with a price of being bribed as well as a probability that its vote is not counted upon taking a bribe. The goal of the attacker is to bribe a subset of voters such that the total bribing cost does not exceed a given budget, while the probability that a designated candidate wins the election is maximized.

The importance of understanding bribed voter uncertainty cannot be overestimated. This is because, even with the proliferation of anonymous and unregulated cryptocurrencies (e.g., Bitcoin) that are deemed as ideal for bribery purposes, there is still a possibility that a bribe-taking voter is detected (see \cite{DBLP:journals/corr/abs-1708-04748}). In the United States, telling a voter whom to vote for is one type of voting fraud and may cause the votes to be discarded (see\citet{govfile}), as attested by the case that the Wetumpka City Council District 2 election was switched after 8 ballots were ruled (by a judge) to be thrown out (see\citet{overturn}).

\subsection{Our Contributions}

In this paper we make three main conributions. First, we introduce and initiate the study of the BVU problem, which captures a new form of uncertainty in bribery.

Second, we characterize the hardness of the BVU problem and show that the newly captured uncertainty completely changes the complexity
of the bribery problem as follows. In the absence of uncertainty, the bribery problem can be solved by a simple greedy algorithm (as shown by \cite{faliszewski2009hard}).
In the presence of uncertainty, assuming $P\neq NP$, there is no $O(1)$-approximation algorithm even if there are only two candidates;
assuming $W[1]\neq \textit{FPT}$, {there is no} $O(1)$-approximation algorithm that runs in FPT time parameterized by $r$,
which is the difference between the number of votes received by the winner and the number of votes received by the designated candidate in the absence of bribery.

Third, despite the strong hardness result mentioned above,
we show the existence of an additive $\epsilon$-approximation FPT algorithm when the number of candidates is a constant. This means that for an arbitrary small $\epsilon>0$, there is an algorithm that runs in FPT-time (parameterized by the parameter $r$ mentioned above) and returns an approximate solution with an objective value that is at most $\epsilon$ smaller than the optimal objective value. This result relies on a reduction from the BVU problem to a new variant of the knapsack problem (involving a stochastic objective and multiple cardinality constraints) and an approximation algorithm for this new variant of the knapsack problem (while leveraging dynamic programming and a non-trivial application of Berry-Essen's Theorem). Both the proof technique and the new variant of the knapsack problem may be of independent interest.


\section{Problem Statement and Preliminaries}
\label{sec:model}

\noindent\textbf{\bf The basic election model.}
In the basic election model, there are a set of $m$ candidates $\mathcal{C}=\{c_1,c_2,\ldots, c_m\}$ and a set of $n$ voters $\mathcal{V}=\{v_1,v_2,\ldots,v_n\}$. Each voter has a preference over the candidates. There is a voting {\em rule} according to which a winner is selected. In this paper we focus on the {\em plurality rule} { with a single winner}, namely that every voter votes for its most preferred  candidate and the winner is the candidate that receives the highest number of votes.

\smallskip	
\noindent\textbf{\bf The classic bribery problem in the basic election model.} 
A voter may be bribed to deviate from its own preference. Suppose each voter $v_i$ has a price $q_i$. If $v_i$ takes a bribe of amount $q_i$ from the briber (or attacker), then $v_i$ will vote, regardless of $v_i$'s own preference, for the {\em designated candidate} of the brier (i.e., the candidate preferred by the briber). The briber has a total bribe budget $Q$. The goal of the briber is to make the designated candidate win the election. The bribery problem has been extensively investigated in the literature; see, for example, \citet{faliszewski2009hard,lin2010complexity,brelsford2008approximability,xia2012computing,faliszewski2015complexity,parkes2012complexity}.

\smallskip
\noindent\textbf{\bf BVU (Election with Bribed Voter Uncertainty): A new problem.} 
As discussed before, we introduce and study a novel variant of the classic bribery problem.
Suppose voter $v_i$ takes a bribe of amount $q_i$ from the briber.
With probability $p_i\in [0,1]$, which is independent of anything else, the vote of $v_i$ goes to the designated candidate and is counted;
with probability $1-p_i$, the vote of $v_i$ is {\em not} counted (for the two reasons mentioned above), that is, {\em no} candidate will receive the vote from $v_i$.
Without loss of generality, let $c_1$ be the winner when there is no bribery
and $c_m$ be the briber's designated candidate. Let $V_j$ be the subset of voters that vote for candidate $c_j$ in the absence of bribery, then $|V_1|>|V_j|$ for any $j>1$. Moreover, let $r=|V_1|-|V_{m}|$, namely the difference between the number of votes received by the winner $c_1$ and the number of votes received by the designated candidate $c_{m}$ in the absence of bribery. The BVU problem asks for a subset of voters in $V\setminus V_m$ whose total price is bounded by $Q$ such that if they are bribed, the probability that the designated candidate $c_m$ wins is maximized, while noting that the voters in $V_m$ do not need to be bribed because they already vote for $c_m$. More precisely, the BVU problem is formalized as follows,

\begin{center}
	\fbox{\begin{minipage}{0.47\textwidth}	
			\textbf{The (Plurality-)BVU Problem}	
			
			Input: 
			A set of $m$ candidates $\mathcal{C}=\{c_1,c_2,\ldots, c_m\}$, {where $c_1$ is the winner and $c_{m}$ is the designated candidate in the absence of bribery}; a set of $n$ voters $\mathcal{V}=\{v_1,v_2,\cdots,v_n\}$ with $\mathcal{V}=\cup_{j=1}^m V_j$, where $V_j$ is the subset of voters that vote for $c_j$ in the absence of bribery; a positive integer $r=|V_1|-|V_{m}|$; the briber's budget $Q$; each $v_i$ is associated with a price $q_i$ for bribe and a probability $p_i$ with which the vote of the {\em bribed} voter $v_i$ goes to the designated candidate $c_m$ and is counted (i.e., $1-p_i$ is the probability that the vote of the {\em bribed} $v_i$ is not counted)
			
			Output: Find a set of indices $I^*\subseteq \{1,2,\cdots,n\}$ such that
			\begin{itemize}
				\item $\sum_{i\in I^*} q_i\le Q$, and
				\item the probability that the designated candidate $c_{m}$ wins is maximized by bribing voters in $V'=\{v_i\in V\setminus V_m|i\in I^*\}$ 
			\end{itemize}
	\end{minipage}}
\end{center}



\paragraph{Preliminaries.}
Let $Z$ be a random variable taking non-negative values. The Markov's inequality (see, for example, \citet{stein2009real}) says the following:
For any $a> 0$, it holds that
\begin{equation}
\label{eq:markov-inequality}
\Pr(Z\ge a)\le \frac{\EE(Z)}{a}.
\end{equation}


\ignore{
	\smallskip\noindent\textbf{Berry-Essen's theorem} (\cite{berry1941accuracy}).
	Let $Z_1, Z_2,\ldots,Z_n$ be independent random variables with $\EE(Z_i) = 0$, $\EE(Z_i^2) = \sigma_i^2 > 0$, and $\EE(|Z_i|^3) = \rho_i <\infty$. Let
	$$S_{n}={Z_{1}+Z_{2}+\ldots +Z_{n} \over {\sqrt  {\sigma _{1}^{2}+\sigma _{2}^{2}+\ldots +\sigma _{n}^{2}}}}.$$
	Then, it holds that
	\begin{equation}
	\label{eq:Berry-Essen-theorem}
	\sup _{{x\in {\mathbb  R}}}\left|F_{n}(x)-\Phi (x)\right|\leq C_{0}\cdot \psi _{0},
	\end{equation}
	where $C_0$ is a universal constant, $F_n(x)$ is the cumulative distribution function of $S_n$, $\Phi(x)$ is the standard normal distribution $\mathcal{N}(0,1)$, and
	$$\psi _{0}={\Big (}{\textstyle \sum \limits _{{i=1}}^{n}\sigma _{i}^{2}}{\Big )}^{{-3/2}}\cdot \sum \limits _{{i=1}}^{n}\rho _{i}.$$
}

\begin{theorem}[Berry-Essen theorem; see \cite{berry1941accuracy}]
	\label{theorem:Berry-Essen}
	Let $Z_1, Z_2,\ldots,Z_n$ be independent random variables with $\EE(Z_i) = 0$, $\EE(Z_i^2) = \sigma_i^2 > 0$, and $\EE(|Z_i|^3) = \rho_i <\infty$. Let
	$$S_{n}={Z_{1}+Z_{2}+\ldots +Z_{n} \over {\sqrt  {\sigma _{1}^{2}+\sigma _{2}^{2}+\ldots +\sigma _{n}^{2}}}}.$$
	Then, it holds that
	\begin{equation}
	\label{eq:Berry-Essen-theorem}
	\sup _{{x\in {\mathbb  R}}}\left|F_{n}(x)-\Phi (x)\right|\leq C_{0}\cdot \psi _{0},
	\end{equation}
	where $C_0$ is a universal constant, $F_n(x)$ is the cumulative distribution function of $S_n$, $\Phi(x)$ is the standard normal distribution $\mathcal{N}(0,1)$, and
	$$\psi _{0}={\Big (}{\textstyle \sum \limits _{{i=1}}^{n}\sigma _{i}^{2}}{\Big )}^{{-3/2}}\cdot \sum \limits _{{i=1}}^{n}\rho _{i}.$$
\end{theorem}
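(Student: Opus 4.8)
The plan is to prove the statement via the classical Fourier-analytic route, combining Esseen's smoothing inequality with a Taylor expansion of the characteristic functions. Write $s_n=\sqrt{\sigma_1^2+\sigma_2^2+\cdots+\sigma_n^2}$, let $\phi_i(t)=\EE(e^{itZ_i})$ be the characteristic function of $Z_i$, and let $f_n(t)=\EE(e^{itS_n})=\prod_{i=1}^n\phi_i(t/s_n)$ be the characteristic function of $S_n$; the target of comparison is $e^{-t^2/2}$, the characteristic function of $\Phi$. The first step is to invoke Esseen's smoothing lemma, which bounds the quantity we care about, $\sup_x|F_n(x)-\Phi(x)|$, by an integral of the form $\frac{1}{\pi}\int_{-T}^{T}|f_n(t)-e^{-t^2/2}|/|t|\,dt$ plus a remainder of order $1/T$ coming from the uniformly bounded density of $\Phi$. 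This reduces the whole problem to controlling the difference of characteristic functions on a finite interval and then choosing $T$ optimally.

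The second step is the core estimate: bounding $|f_n(t)-e^{-t^2/2}|$. Using the moment hypotheses, a third-order Taylor expansion gives $\phi_i(u)=1-\tfrac12\sigma_i^2u^2+\theta_i(u)$ with $|\theta_i(u)|\le\tfrac16\rho_i|u|^3$, so that with $u=t/s_n$ each factor is close to $1-\tfrac12\sigma_i^2t^2/s_n^2$. The idea is to compare the product $\prod_i\phi_i(t/s_n)$ with $\prod_ie^{-\sigma_i^2t^2/(2s_n^2)}=e^{-t^2/2}$ using the elementary bound $|\prod a_i-\prod b_i|\le\sum|a_i-b_i|$ valid when all $|a_i|,|b_i|\le 1$. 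Carrying this out, and restricting to the range $|t|\le c/\psi_0$ where the accumulated cubic remainder $\sum_i\rho_i|t/s_n|^3=\psi_0|t|^3$ stays dominated by the quadratic main term, yields an estimate of the shape $|f_n(t)-e^{-t^2/2}|\le C\psi_0|t|^3e^{-t^2/3}$.

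The third step is to assemble the pieces. I would set $T=c/\psi_0$ in the smoothing inequality. On this interval, substituting the core estimate makes $\frac{1}{\pi}\int_{-T}^{T}|f_n(t)-e^{-t^2/2}|/|t|\,dt$ bounded by $C\psi_0\int_{-\infty}^{\infty}t^2e^{-t^2/3}\,dt=O(\psi_0)$, while the smoothing remainder contributes $O(1/T)=O(\psi_0)$ as well; summing the two gives the claimed bound $C_0\cdot\psi_0$. One routine but necessary preliminary is to dispose of the trivial case where $\psi_0$ exceeds some absolute constant: there the asserted bound holds automatically, since the left-hand side $\sup_x|F_n(x)-\Phi(x)|$ never exceeds $1$. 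This lets us assume $\psi_0$ is small enough that the Taylor range $|t|\le c/\psi_0$ is genuinely large.

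The main obstacle I expect is the second step, specifically making the product-comparison estimate uniform while preserving Gaussian decay. One must simultaneously ensure that each factor obeys $|\phi_i(t/s_n)|\le e^{-\sigma_i^2t^2/(3s_n^2)}$, so that the partial products inherit exponential decay, and control the accumulated cubic error $\psi_0|t|^3$ so that it does not overwhelm the main term. Balancing these two competing requirements is exactly what forces the cutoff $T\asymp 1/\psi_0$ and what ultimately produces the universal constant $C_0$.
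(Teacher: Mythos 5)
The paper does not actually prove this statement: Theorem~\ref{theorem:Berry-Essen} is the classical Berry--Esseen theorem, quoted verbatim from the literature (\cite{berry1941accuracy}) and used downstream only as a black box (in Lemma~\ref{lemma:transform}). So there is no in-paper proof to compare yours against; I can only assess your sketch on its own terms. Your outline is the standard Fourier-analytic route (Esseen's smoothing inequality, third-order Taylor expansion of characteristic functions, cutoff $T\asymp 1/\psi_0$, plus the reduction to the case of small $\psi_0$), and your first and third steps are sound.

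The one place where the sketch as written would fail is the mechanism you propose for preserving Gaussian decay in the second step. The per-factor bound $|\phi_i(t/s_n)|\le e^{-\sigma_i^2 t^2/(3s_n^2)}$ is \emph{not} valid on the whole range $|t|\le c/\psi_0$ when the summands are not identically distributed: the Taylor argument only yields it for $|t|\lesssim s_n\sigma_i^2/\rho_i$, which can be far smaller than $1/\psi_0$, and the inequality genuinely fails beyond that range (e.g.\ for a lattice variable taking values $\pm M$ with small probability, $|\phi_i|$ returns to $1$ at $t/s_n=2\pi/M$, a point that lies well inside $[0,c/\psi_0]$ in natural examples with many additional well-behaved summands). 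The classical repair obtains the decay from the aggregate rather than factor by factor: symmetrization gives the \emph{unconditional} bound $|\phi_i(u)|^2\le\exp\left(-\sigma_i^2u^2+\tfrac{4}{3}\rho_i|u|^3\right)$ for all $u$, so the full product satisfies $|f_n(t)|\le e^{-t^2/3}$ on $|t|\le 1/(4\psi_0)$ because the \emph{total} cubic error $\tfrac{4}{3}\psi_0|t|^3\le t^2/3$ is absorbed by the \emph{total} quadratic term $t^2$, even though no single factor need decay; the telescoping comparison of $f_n(t)$ with $e^{-t^2/2}$ is then arranged so that only such aggregate products must be controlled. With that adjustment --- or by simply citing the key lemma $|f_n(t)-e^{-t^2/2}|\le 16\,\psi_0|t|^3e^{-t^2/3}$ for $|t|\le 1/(4\psi_0)$ from Esseen, Feller, or Petrov --- your plan closes and delivers the stated bound.
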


The following {\bf Proposition \ref{lemma:folklore}} is a folklore.

\begin{proposition}\label{lemma:folklore}
	Let $Y_1,Y_2,Z_1,Z_2$ be independent random variables taking values in $\mathbb{Z}_{\ge 0}$ (i.e., {non-negative integers}) such that for any integer $0\le h\le N$ the following hold:
	$$\Pr(Y_1\ge h)\ge (1-\delta) \Pr(Z_1\ge h),$$
	$$\Pr(Y_2\ge h)\ge (1-\delta)\Pr(Z_2\ge h).$$
	Then, for any $0\le \ell\le N$, we have:
	$$\Pr(Y_1+Y_2\ge \ell)\ge (1-\delta)^2\Pr(Z_1+Z_2\ge \ell).$$
\end{proposition}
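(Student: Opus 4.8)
The plan is to swap $Z_1$ for $Y_1$ and then $Z_2$ for $Y_2$ one variable at a time, gaining a factor of $(1-\delta)$ at each swap. The heart of the argument is the following one-variable statement: if $A,B,W$ are independent non-negative integer random variables with $\Pr(A\ge h)\ge(1-\delta)\Pr(B\ge h)$ for all $0\le h\le N$, then $\Pr(A+W\ge \ell)\ge(1-\delta)\Pr(B+W\ge \ell)$ for all $0\le \ell\le N$. Granting this, I would apply it first with $(A,B,W)=(Y_1,Z_1,Z_2)$ to get $\Pr(Y_1+Z_2\ge\ell)\ge(1-\delta)\Pr(Z_1+Z_2\ge\ell)$, and then with $(A,B,W)=(Y_2,Z_2,Y_1)$ to get $\Pr(Y_1+Y_2\ge\ell)\ge(1-\delta)\Pr(Y_1+Z_2\ge\ell)$; chaining the two inequalities yields the claimed $(1-\delta)^2$ bound. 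Note that independence of all four variables is exactly what is needed to make both applications legitimate.

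To prove the one-variable statement I would condition on the value of $W$. Since $A$ and $B$ are each independent of $W$,
\begin{equation*}
\Pr(A+W\ge\ell)=\sum_{w\ge 0}\Pr(W=w)\,\Pr(A\ge \ell-w),
\end{equation*}
and the analogous identity holds with $B$ in place of $A$. Hence it suffices to show, for each fixed $w\ge 0$, that $\Pr(A\ge \ell-w)\ge(1-\delta)\Pr(B\ge \ell-w)$, after which I multiply by $\Pr(W=w)\ge 0$ and sum over $w$ to recover the desired inequality for the sums.

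The only point requiring care is the range of the threshold $\ell-w$. If $w>\ell$, then $\ell-w<0$, so both $\Pr(A\ge\ell-w)$ and $\Pr(B\ge\ell-w)$ equal $1$ and the desired inequality reads $1\ge(1-\delta)$, which holds since $\delta\ge 0$. If $0\le w\le \ell$, then the threshold satisfies $0\le \ell-w\le \ell\le N$, so it lies in the range $[0,N]$ on which the hypothesis is assumed, and the per-term inequality follows immediately. This covers all $w$, completing the one-variable statement and hence the proposition. The argument is essentially bookkeeping; the only thing to watch is precisely this boundary behaviour, namely ensuring that conditioning never pushes the threshold outside $[0,N]$ in a way that invalidates the hypothesis — which it does not, because $W\ge 0$ forces $\ell-w\le \ell\le N$.
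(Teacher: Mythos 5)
Your proposal is correct and follows essentially the same route as the paper: the paper likewise swaps one variable at a time (establishing $\Pr(Y_1+Y_2\ge\ell)\ge(1-\delta)\Pr(Y_1+Z_2\ge\ell)$ by conditioning on $Y_1$, then $\Pr(Y_1+Z_2\ge\ell)\ge(1-\delta)\Pr(Z_1+Z_2\ge\ell)$, and chaining). Your explicit handling of the thresholds $\ell-w$ outside $[0,N]$ is the same bookkeeping the paper performs by splitting off the $\Pr(Y_1\ge\ell)$ term, just phrased more carefully.
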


\begin{proof}
	For any integer $0\le \ell\le N$, we have
	\begin{eqnarray*}
		&&\Pr(Y_1+Y_2\ge \ell)\\
		&=&\sum_{j=0}^{\ell-1} \Pr(Y_1=j)\Pr(Y_2\ge \ell-j)+\Pr(Y_1\ge \ell)\\
		&\ge &(1-\delta)[\sum_{j=0}^{\ell-1} \Pr(Y_1=j)\Pr(Z_2\ge \ell-j)+\Pr(Y_1\ge \ell)]\\
		&\ge & (1-\delta)\Pr(Y_1+Z_2\ge \ell).
	\end{eqnarray*}
	Similarly, we can prove that
	\begin{eqnarray*}
		\Pr(Y_1+Z_2\ge \ell)\ge (1-\delta)\Pr(Z_1+Z_2\ge \ell).
	\end{eqnarray*}
	Hence, $\Pr(Y_1+Y_2\ge \ell)\ge (1-\delta)^2\Pr(Z_1+Z_2\ge \ell).$	
\end{proof}

{\bf Proposition \ref{lemma:folklore}} can be re-written additively as follows.
\begin{coro}\label{coro:folklore}
	Let $Y_1,Y_2,Z_1,Z_2$ be independent random variables taking values in $\mathbb{Z}_{\ge 0}$ such that for any integer $0\le h\le \ell$, the following hold:
	$$\Pr(Y_1\ge h)\ge \Pr(Z_1\ge h)-\delta,$$
	$$\Pr(Y_2\ge h)\ge \Pr(Z_2\ge h)-\delta.$$
	Then, we have:
	$$\Pr(Y_1+Y_2\ge \ell)\ge \Pr(Z_1+Z_2\ge \ell)-2\delta.$$
\end{coro}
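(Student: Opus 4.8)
The plan is to mirror the proof of Proposition~\ref{lemma:folklore} almost verbatim, replacing each multiplicative factor $(1-\delta)$ by a subtracted additive error $\delta$, and to run the same two-step hybrid argument: first replace $Y_2$ by $Z_2$ at an additive cost of $\delta$, then replace $Y_1$ by $Z_1$ at a further additive cost of $\delta$, so that the two errors sum to $2\delta$. The key identity driving everything is the conditioning decomposition
\begin{equation*}
\Pr(Y_1+Y_2\ge \ell)=\sum_{j=0}^{\ell-1}\Pr(Y_1=j)\,\Pr(Y_2\ge \ell-j)+\Pr(Y_1\ge \ell),
\end{equation*}
which is valid because $Y_1,Y_2$ are independent and $\mathbb{Z}_{\ge 0}$-valued: when $Y_1=j<\ell$ we need $Y_2\ge \ell-j$, and when $Y_1\ge \ell$ the event holds regardless of $Y_2\ge 0$.

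First I would carry out the $Y_2\to Z_2$ replacement. For each $j$ in the sum, $0\le j\le \ell-1$, the argument $\ell-j$ lies in $\{1,\dots,\ell\}\subseteq[0,\ell]$, so the hypothesis $\Pr(Y_2\ge \ell-j)\ge \Pr(Z_2\ge \ell-j)-\delta$ applies. Substituting and separating the error term gives
\begin{equation*}
\Pr(Y_1+Y_2\ge \ell)\ge \Pr(Y_1+Z_2\ge \ell)-\delta\sum_{j=0}^{\ell-1}\Pr(Y_1=j).
\end{equation*}
Since $\sum_{j=0}^{\ell-1}\Pr(Y_1=j)\le 1$, the leftover error is at most $\delta$, yielding $\Pr(Y_1+Y_2\ge \ell)\ge \Pr(Y_1+Z_2\ge \ell)-\delta$.

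Next I would repeat the identical decomposition conditioning on $Z_2$ instead, using the independence of $Z_2$ from $Y_1$ and the hypothesis $\Pr(Y_1\ge h)\ge \Pr(Z_1\ge h)-\delta$, to obtain $\Pr(Y_1+Z_2\ge \ell)\ge \Pr(Z_1+Z_2\ge \ell)-\delta$. Chaining the two inequalities gives the claimed bound $\Pr(Y_1+Y_2\ge \ell)\ge \Pr(Z_1+Z_2\ge \ell)-2\delta$.

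I do not expect a serious obstacle here, since the corollary is an elementary additive restatement. The only points requiring care are bookkeeping ones: verifying that every invocation of the hypothesis uses an index in the permitted range $[0,\ell]$ (which is why the decomposition splits off the tail term $\Pr(Y_1\ge\ell)$ rather than pushing $j$ up to $\ell$), and controlling the accumulated error by the crude but sufficient bound $\sum_j\Pr(\,\cdot=j)\le 1$ at each of the two hybrid steps so that the total error is exactly $2\delta$ rather than something larger.
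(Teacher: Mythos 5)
Your argument is correct and is exactly the additive adaptation of the paper's proof of Proposition~\ref{lemma:folklore}, which is all the paper itself intends (it offers no separate proof of the corollary, stating only that it is a rewriting of the proposition). The two hybrid steps, the conditioning decomposition with the split-off tail term, and the bound $\sum_{j}\Pr(Y_1=j)\le 1$ all match the paper's approach.
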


By iteratively applying {\bf Corollary \ref{coro:folklore}} to a sequence of independent random variables, we obtain the following corollary that will be used later.

\begin{coro}\label{coro:folklore-sequence}
	Let $Y_j,Z_j$, $1\le j\le m$, be $2m$ independent random variables taking values in $\mathbb{Z}_{\ge 0}$ such that for any integer $0\le h\le \ell$ and $1\le j\le m$, the following holds:
	$$\Pr(Y_j\ge h)\ge (1-\delta)\Pr(Z_j\ge h)-\delta$$
	Then, we have:
	$$\Pr\left(\sum_{j=1}^mY_j\ge \ell\right)\ge (1-\delta)^m\Pr\left(\sum_{j=1}^mZ_j\ge \ell\right)-m\delta.$$
\end{coro}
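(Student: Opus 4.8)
The plan is to run an induction on $m$ whose inductive step is carried by a single two-variable combining inequality. Observe first that neither Proposition~\ref{lemma:folklore} (purely multiplicative) nor Corollary~\ref{coro:folklore} (purely additive) applies to the hypothesis as written, because $\Pr(Y_j\ge h)\ge(1-\delta)\Pr(Z_j\ge h)-\delta$ carries both a multiplicative factor and an additive slack at once. So the first step is to establish a mixed version: if $U_1,U_2,W_1,W_2$ are independent, $\mathbb{Z}_{\ge 0}$-valued, and $\Pr(U_i\ge h)\ge\alpha_i\Pr(W_i\ge h)-\beta_i$ holds for all $0\le h\le\ell$ with $\alpha_1,\alpha_2\in[0,1]$ and $\beta_1,\beta_2\ge 0$, then $\Pr(U_1+U_2\ge\ell)\ge\alpha_1\alpha_2\Pr(W_1+W_2\ge\ell)-\beta_1-\beta_2$.

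To prove this mixed inequality I would swap one summand at a time, mirroring the conditioning argument already used in the proof of Proposition~\ref{lemma:folklore}. Writing $\Pr(U_1+U_2\ge\ell)=\sum_{j=0}^{\ell-1}\Pr(U_1=j)\Pr(U_2\ge\ell-j)+\Pr(U_1\ge\ell)$ and inserting the bound on $U_2$ (legitimate since each $\ell-j$ lies in $\{1,\dots,\ell\}$), the additive contribution collapses to at most $\beta_2$ because $\sum_{j}\Pr(U_1=j)\le 1$, while the uncovered tail term $\Pr(U_1\ge\ell)$, which receives no multiplicative factor on its own, can be bounded below by $\alpha_2\Pr(U_1\ge\ell)$ precisely because $\alpha_2\le 1$. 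This yields $\Pr(U_1+U_2\ge\ell)\ge\alpha_2\Pr(U_1+W_2\ge\ell)-\beta_2$. Repeating the identical step to replace $U_1$ by $W_1$ gives $\Pr(U_1+W_2\ge\ell)\ge\alpha_1\Pr(W_1+W_2\ge\ell)-\beta_1$, and chaining the two (once more using $\alpha_2\le 1$ to weaken $\alpha_2\beta_1$ to $\beta_1$) produces the claimed mixed bound.

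With the combining inequality in hand, I would strengthen the target so the induction can close: for every $1\le k\le m$ and every threshold $0\le h\le\ell$, $\Pr\bigl(\sum_{j=1}^kY_j\ge h\bigr)\ge(1-\delta)^k\Pr\bigl(\sum_{j=1}^kZ_j\ge h\bigr)-k\delta$. The base case $k=1$ is exactly the hypothesis. For the inductive step I would apply the mixed inequality with $U_1=\sum_{j=1}^{k-1}Y_j$, $W_1=\sum_{j=1}^{k-1}Z_j$, $\alpha_1=(1-\delta)^{k-1}$, $\beta_1=(k-1)\delta$ (supplied by the induction hypothesis, which crucially holds at \emph{all} thresholds), together with $U_2=Y_k$, $W_2=Z_k$, $\alpha_2=1-\delta$, $\beta_2=\delta$; the output is verbatim the level-$k$ statement, and independence of the blocks is what lets the two families be treated as independent pairs. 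Setting $k=m$ and $h=\ell$ gives the corollary.

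The hard part is the bookkeeping in the mixed combining inequality, and specifically the treatment of the boundary tail $\Pr(U_1\ge\ell)$ in the conditioning identity: unlike the terms inside the sum it does not automatically inherit a multiplicative discount, so the argument has to manufacture one using $\alpha_i\le 1$, and simultaneously keep the additive slack from accumulating faster than $\beta_1+\beta_2$ by exploiting that a probability mass function sums to at most one. The second, more structural, subtlety is that the induction must be maintained over all thresholds $h\le\ell$ rather than only $h=\ell$, since the conditioning sum references the partial-sum bound at every intermediate value $\ell-j$; stating the invariant only at $\ell$ would break the recursion.
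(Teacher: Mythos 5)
Your proof is correct and follows essentially the route the paper intends: the two-variable conditioning/swap argument from Proposition~\ref{lemma:folklore}, iterated by induction with the invariant maintained at all thresholds $h\le\ell$. Your one genuine addition is the explicit mixed multiplicative--additive combining lemma, which is in fact needed here --- the paper's one-line justification ``by iteratively applying Corollary~\ref{coro:folklore}'' is slightly imprecise since that corollary is purely additive while the hypothesis of Corollary~\ref{coro:folklore-sequence} carries both a factor $(1-\delta)$ and a slack $\delta$ --- so your more careful bookkeeping (using $\alpha_i\le 1$ to absorb the uncovered tail term and to weaken $\alpha_2\beta_1$ to $\beta_1$) closes a small gap in the paper's presentation rather than departing from it.
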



\section{Hardness of the BVU Problem}
\label{sec:inapproximability-results}
We show the hardness of the BVU problem for $m=2$. By introducing {\em dummy} voters whose prices are higher than the briber's budget $Q$ (i.e., they cannot be bribed), {the hardness result immediately applies to the case of an arbitrary $m>2$}.

\subsection{Hardness Result}
The goal of this subsection is to prove the following.
\begin{theorem}[Main hardness result]
	\label{coro:pec-c-u}
	Assuming $W[1]\neq FPT$, there does not exist an $O(1)$-approximation algorithm for BVU problem that runs in FPT time parameterized by $r$, even if $m=2$.
	Moreover, assuming $P\neq NP$, there does not exist an $O(1)$-approximation algorithm for the BVU problem that runs in polynomial time if $r$ is part of the input, even if $m=2$. 
\end{theorem}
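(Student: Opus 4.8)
The plan is to reformulate the objective, observe why the inapproximability must be a multiplicative gap between \emph{positive} optima, build a gap reduction from a problem that is hard in both senses required, and control the no-instance optimum with Markov's inequality.

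First I would rewrite the winning event. With $m=2$ we only bribe voters of $V_1$; bribing a set $S$ deterministically removes $|S|$ votes from $c_1$ and adds $\sum_{i\in S}X_i$ votes to $c_m$, where the $X_i\sim\mathrm{Bernoulli}(p_i)$ are independent. Hence $c_m$ wins iff $\sum_{i\in S}X_i\ge r-|S|$ (up to the tie rule), and the objective is $\max_{\sum_{i\in S}q_i\le Q}\Pr\!\big(\sum_{i\in S}X_i\ge r-|S|\big)$. Two structural facts then guide the reduction. First, if the budget can purchase any $r$ voters, then $c_1$ drops to at most $|V_m|$ while $c_m$ never decreases, so $\OPT=1$; thus every hard instance must have a budget too small to buy $r$ cheap voters, forcing reliance on the rare event that enough bribes succeed. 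Second, deciding whether $\OPT=0$ is polynomial: assign each voter a ``swing'' $s_i=2$ if $p_i>0$ and $s_i=1$ otherwise; then $\OPT>0$ iff some budget-feasible $S$ has $\sum_{i\in S}s_i\ge r$, and maximizing a $\{1,2\}$-valued profit under a single knapsack budget is solved by enumerating the number of chosen doubles and singles. Consequently the inapproximability cannot separate $0$ from positive values, and I must instead exhibit a super-constant multiplicative gap between positive optima.

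Given this, the strategy is a gap reduction from a problem that is simultaneously NP-hard and $W[1]$-hard when parameterized by the solution size $k$, mapping $k$ to $r=\Theta(k)$ so that the FPT parameter $r$ tracks $k$. I would encode the combinatorial constraint in the prices with the standard large-offset gadget, taking $q_i=B+a_i$ with $B$ huge and budget $Q=kB+t$, which forces any useful bribe to contain exactly $k$ voters (bribing fewer makes $r-|S|>|S|$, so the winning probability is $0$) while imposing the one-sided selection constraint $\sum_{i\in S}a_i\le t$; with $|S|=k$ fixed the threshold $r-k$ is fixed. I would then tie the probabilities $p_i$ to the $a_i$ so that a certificate of a yes-instance yields a feasible $k$-set whose expected number of successes comfortably exceeds $r-k$, giving winning probability bounded below by a constant, whereas a no-instance forces every feasible $k$-set to have expected successes far below the threshold. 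For the no-instance upper bound I would invoke Markov's inequality uniformly over all feasible $S$,
\[
\Pr\!\Big(\sum_{i\in S}X_i\ge r-|S|\Big)\le\frac{\EE\big[\sum_{i\in S}X_i\big]}{r-|S|}=\frac{\sum_{i\in S}p_i}{r-|S|},
\]
and scale the $p_i$ (to values of the form $2^{-\mathrm{poly}}$, still representable in polynomially many bits) so that this bound falls below any prescribed $1/\alpha$; since for each fixed constant $\alpha$ the reduction is polynomial, an $\alpha$-approximation would decide the source problem, contradicting $P\neq NP$ (with $r$ in the input) and $W[1]\neq FPT$ (in FPT time in $r$).

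The hard part will be engineering the gadget so that the \emph{expected-success} quantity $\sum_{i\in S}p_i$ is genuinely governed by the source problem while the Markov bound stays valid for \emph{every} budget-feasible $S$, not merely the intended one, including sets that mix reliable and unreliable voters; reconciling the one-sided budget with the effectively two-sided requirement of an exact certificate is the delicate point, since naive encodings collapse to an easy ``pick the cheapest items'' selection. If a single instance does not separate yes from no by more than a fixed factor, I would amplify the gap by composing independent copies and arguing about their combined threshold in the spirit of Corollary~\ref{coro:folklore-sequence}, turning a $1/\mathrm{poly}$ per-copy gap into one exceeding any constant. Finally, I would verify that the same construction delivers both conclusions of the theorem: the parameter-preserving version ($r=\Theta(k)$) for the $W[1]$-hardness and the version with $r$ unrestricted for the classical NP-hardness.
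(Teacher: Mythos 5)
Your skeleton matches the paper's in several places --- reduce from a subset-selection problem that is $W[1]$-hard in the solution size (the paper uses $d$-sum), use a large-offset price gadget $q_i=M-\omega x_i$, $Q=dM-\omega t$ to force $|I|=d$ and a one-sided constraint on $\sum_{i\in I}x_i$, and encode the source integers in the $p_i$ --- but the mechanism you propose for creating the gap does not work, and the idea that actually makes the reduction go through is missing. You want yes-instances to have winning probability bounded below by a constant (via expected successes well above the threshold) and no-instances to be bounded above by $1/\alpha$ via Markov's inequality applied to $\sum_{i\in S}p_i/(r-|S|)$. But in a subset-sum-type source problem the no-instances contain feasible sets whose relevant sum misses the target by an additive $1$; any encoding in which the objective depends on $\sum_{i\in S}p_i$ only through its expectation (or any polynomially Lipschitz function of it) cannot convert that additive-$1$ miss into a multiplicative gap exceeding an arbitrary constant $\alpha$. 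Your fallback --- amplifying by composing independent copies ``in the spirit of Corollary~\ref{coro:folklore-sequence}'' --- is not a repair: that corollary lower-bounds tail probabilities of sums of independent variables and says nothing about how to combine knapsack instances sharing one budget and one threshold; the copies do not decompose.

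The paper's resolution, which you do not reach, is to choose $r=2d-1$ so that a feasible solution has $|I|=d$ and the winning threshold is $r+1-|I|=d=|I|$: \emph{every} bribed vote must be counted, so the objective degenerates to the product $\prod_{i\in I}p_i$ (and is exactly $0$ when $|I|\le d-1$, so Markov is never needed). Setting $p_i=2^{-\omega x_i}$ with $\omega=\lceil\log_2\alpha\rceil+1$ then makes the objective equal $2^{-\omega\sum_{i\in I}x_i}$, so the unit additive difference between $\sum x_i=t$ (yes) and $\sum x_i\ge t+1$ (no, given the budget forces $\sum x_i\ge t$) becomes a multiplicative factor $2^{\omega}>\alpha$. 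This is precisely the device that reconciles the ``one-sided budget versus exact certificate'' tension you correctly flag as the delicate point: feasibility forces $\sum x_i\ge t$ from below, and maximizing the product forces it down to $t$ from above. Without the all-must-succeed threshold and the exponential encoding, your reduction as described would not defeat even a fixed constant $\alpha$.
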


In order to prove {\bf Theorem~\ref{coro:pec-c-u}}, we leverage the equivalence between the BVU problem with $m=2$ and the following
Knapsack with Uncertainty (KU) problem.

\begin{center}
	\fbox{\begin{minipage}{0.47\textwidth}
			\textbf{Knapsack with Uncertainty (KU)}		
			
			Input: A knapsack of capacity $Q$; a set of $n'$ items, with each item associated with a size $q_i$ and a profit $P_i$, which is an independent random variable such that $\Pr(P_i=1)=p_i$ and $\Pr(P_i=0)=1-p_i$; a positive integer $r$.
			
			Output: Find a set of indices $I^*\subseteq \{1,2,\cdots,n\}$ such that
			\begin{itemize}
				\item $\sum_{i\in I^*} q_i\le Q$, and
				\item $\Pr\left(\sum_{i\in I^*} P_i\ge r+1-|I^*|\right)$ is maximized.
			\end{itemize}
	\end{minipage}}
\end{center}

\begin{lemma}
	\label{lemma:BU-KU-equivalence}
	The BVU problem with $m=2$ is equivalent to the KU problem. 
\end{lemma}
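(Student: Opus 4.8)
The plan is to prove the equivalence by exhibiting a value-preserving correspondence between feasible solutions of the two problems, resting on a single structural observation about how bribery acts when $m=2$. Since $m=2$ forces $V=V_1\cup V_2$ with $V_2=V_m$, the only voters the briber may touch are those in $V\setminus V_m=V_1$, so a BVU solution is just a subset $I^*\subseteq V_1$ with $\sum_{i\in I^*}q_i\le Q$. First I would observe that every bribed voter abandons $c_1$ \emph{regardless} of the coin flip: with probability $p_i$ the vote is counted for $c_m=c_2$, and with probability $1-p_i$ it is discarded, but in neither case does it return to $c_1$. Hence bribing $I^*$ makes the tally of $c_1$ equal to $|V_1|-|I^*|$ \emph{deterministically}, while the tally of $c_2$ becomes $|V_2|+\sum_{i\in I^*}P_i$, where each $P_i$ is the Bernoulli$(p_i)$ indicator that $v_i$'s vote is counted for $c_m$. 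This is exactly the way the items of a KU instance contribute independent $0/1$ profits.

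Next I would translate the winning event. Under the plurality rule with a single (strict) winner, $c_m=c_2$ wins iff $|V_2|+\sum_{i\in I^*}P_i>|V_1|-|I^*|$; rearranging and using integrality of the vote counts yields $\sum_{i\in I^*}P_i\ge(|V_1|-|V_2|)-|I^*|+1=r+1-|I^*|$, which is precisely the event whose probability the KU objective maximizes. Therefore, under the identification of each voter in $V_1$ with an item of size $q_i$ and profit $P_i$, with knapsack capacity $Q$ and the same integer $r$, a set $I^*$ is feasible for BVU iff it is feasible for KU, and the two objective values coincide. This settles the forward direction (BVU$\to$KU) and shows the map is a bijection on feasible solutions preserving objective value.

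For the reverse direction (KU$\to$BVU) I would start from a KU instance, introduce two candidates $c_1,c_2$, let $V_1$ contain one voter per item (carrying its price $q_i$ and probability $p_i$), and set $|V_2|=|V_1|-r$, keeping the budget $Q$ and the parameter $r$ unchanged. The identical computation shows the probabilities agree. The one feasibility subtlety is that the BVU input requires $c_1$ to be the strict initial winner, i.e. $|V_1|>|V_2|$ and $|V_2|\ge0$; since $r$ is a positive integer this holds as long as $|V_1|\ge r$, and if the KU instance has fewer items I would simply pad $V_1$ with dummy voters whose price exceeds $Q$ (so they can never be bribed), which enlarges $|V_1|$ without altering any feasible solution or its objective. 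Both reductions are linear-time and preserve $r$ exactly, which is what the FPT-hardness statement of Theorem~\ref{coro:pec-c-u} will need.

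The main obstacle I expect is purely in the accounting rather than in any deep argument: getting the strict-versus-nonstrict plurality comparison and the integer rounding exactly right, so that the threshold lands on $r+1-|I^*|$ and not on $r-|I^*|$, and confirming that it is the \emph{deterministic} loss of $c_1$-votes (the term $-|I^*|$) that makes the KU threshold depend on the cardinality $|I^*|$ of the chosen set. Once that bookkeeping is pinned down, the equivalence follows immediately because the correspondence between voters in $V_1$ and KU items is exact.
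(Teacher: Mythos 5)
Your proposal is correct and follows essentially the same route as the paper: both arguments rest on the observation that a bribed voter leaves $c_1$ deterministically while contributing a Bernoulli profit to $c_2$, so the winning event becomes $\sum_{i\in I^*}P_i\ge r+1-|I^*|$, which is exactly the KU objective. Your explicit treatment of the reverse direction (padding $V_1$ with unbribeable dummy voters so that $|V_1|>|V_2|\ge 0$) is a small bit of extra care the paper leaves implicit, but it does not change the substance of the argument.
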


\begin{proof}[{\bf Proof of Lemma \ref{lemma:BU-KU-equivalence}}]
	Consider the BVU problem with $m=2$. Recall that $c_1$ is the winner in the absence of bribery, $c_2$ is the designated candidate, $r=|V_1|-|V_2|$, and the problem is to bribe a set $V'=\{v_i\in V_1|i\in I^*\}$ of voters so that the probability $c_2$ wins is maximized.

	Consider the number of votes received by candidates $c_1$ and $c_2$ {\em after} the briber bribes the voters in $V'$. 
	For a bribed voter $v_i\in V'$, there are two possibilities: 
	\begin{itemize}
		\item The vote of $v_i$ is counted, meaning the number of votes received by candidate $c_1$ decreases by 1 and the number of votes received by candidate $c_{2}$ increases by 1.
		\item The vote of $v_i$ is not counted, meaning the number of votes received by $c_1$ decreases by 1 but the number of votes received by $c_{2}$ remains the same. 
	\end{itemize}
	This means that the votes received by candidate $c_1$ decreases to $|V_1|-|V'|$. Hence, for $c_2$ to win, it needs at least $|V_1|-|V'|+1$ votes. Given that $c_2$ originally receives $|V_2|$ votes, at least $|V_1|-|V'|-|V_2|+1=r-|I^*|+1$ votes from the bribed voters are counted. Let $X_i$ be a binary random variable indicating whether the vote of $v_i$ is counted, then $\Pr(X_i=1)=p_i$ and $\Pr(X_i=0)=1-p_i$. The probability that at least $r-|I^*|+1$ votes of the bribed voters are counted is $\Pr\left(\sum_{i\in I^*} X_i\ge r+1-|I^*|\right)$. That is, the BVU problem with $m=2$ essentially asks for an index set $I^*$ such that $\sum_{i\in I^*} q_i\le Q$ and $\Pr\left(\sum_{i\in I^*} X_i\ge r+1-|I^*|\right)$ is maximized. This is exactly the KU problem.
	
\end{proof}

In order to prove {\bf Theorem~\ref{coro:pec-c-u}}, we also need:

\begin{theorem}
	\label{thm:hardness}
	Assuming $W[1]\neq FPT$, there does not exist an $O(1)$-approximation algorithm for the KU problem that runs in FPT time parameterized by $r$.
\end{theorem}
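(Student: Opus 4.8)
The plan is to give a gap-producing FPT-reduction from a $W[1]$-hard problem (parameterized by a solution size $k$) to the KU problem with $r=\Theta(k)$, arranged so that \emph{yes}-instances have an optimum $\OPT$ that exceeds that of \emph{no}-instances by more than any prescribed constant factor $c$. The key enabling observation is that for a fixed candidate set $I^{*}$ the objective $\Pr(\sum_{i\in I^{*}}P_{i}\ge r+1-|I^{*}|)$ is computable exactly in polynomial time by a standard convolution/dynamic program over the independent Bernoulli variables. Hence a putative $c$-approximation running in FPT time in $r$ could be run, its returned set evaluated, and its value compared against a threshold placed strictly between the two regimes; this would decide the $W[1]$-hard problem in FPT time, contradicting $W[1]\neq FPT$. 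Because the construction will be set up for an arbitrary target ratio $c$, ruling out every constant factor at once, this yields the theorem.

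Concretely, I would first attempt a reduction from \emph{$k$-Subset-Sum} (does some $k$ of the integers $a_{1},\dots,a_{n}$ sum to exactly $T$?). Setting $r=2k-1$ makes the threshold equal $r+1-|I^{*}|=2k-|I^{*}|$, so any feasible set of size $<k$ has objective $0$, a set of size exactly $k$ must have \emph{all} its items counted, and the objective collapses to the separable product $\prod_{i\in S}p_{i}$. I would pin the cardinality to exactly $k$ with a budget offset: items of size $q_{i}=a_{i}+B$ and budget $Q=kB+T$ with $B>T$, so that every $(k+1)$-set overflows the budget while a $k$-set is feasible if and only if $\sum_{i\in S}a_{i}\le T$. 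Choosing $p_{i}=\beta^{M-a_{i}}$ for a base $\beta\in(0,1)$ and $M=1+\max_{j}a_{j}$ then gives $\OPT=\beta^{\,kM-\sigma^{*}}$, where $\sigma^{*}$ is the largest attainable subset sum $\le T$; thus $\sigma^{*}=T$ in \emph{yes}-instances and $\sigma^{*}\le T-1$ in \emph{no}-instances, a clean multiplicative gap of $1/\beta$, which I would tune (take $\beta<1/c$) to exceed $c$.

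The hard part, and the point where this first attempt needs repair, is keeping the reduction genuinely FPT. The factorized objective can separate the exact target $T$ from a near miss only through the fine granularity of the $p_{i}$, whose bit-length is proportional to $M-a_{i}$ and hence to the magnitude of the input integers; but \emph{Subset-Sum} is $W[1]$-hard only when those integers are \emph{not} bounded by any $f(k)\cdot\mathrm{poly}(n)$ (otherwise the pseudo-polynomial dynamic program is already FPT), so the $p_{i}$ are not FPT-representable. Overcoming this is the crux. The route I would then pursue is to reduce instead from a $W[1]$-hard problem whose numbers are polynomially bounded, such as \emph{(Multicolored) Clique} or \emph{Independent Set}, and to abandon the factorization, working directly with the non-separable tail $\Pr(\sum_{i\in I^{*}}\mathrm{Bern}(f_{i})\le 2|I^{*}|-r-1)$ with $f_{i}=1-p_{i}$. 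One designs vertex/edge gadgets so that a genuine size-$k$ clique lets the selected set stay within its failure tolerance with good probability, whereas every non-clique selection is forced to contain at least one extra almost-sure failure that depresses the tail by the required factor. The technically demanding step is the gap analysis in this non-separable regime: bounding the contributions of the exponentially many success/failure patterns so that the ``clique'' configuration provably dominates, while respecting the monotonicity fact that adding any within-budget item never decreases the objective (so the budget-and-count gadget must pin the admissible selections down exactly).
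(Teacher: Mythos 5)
Your first construction is, up to a change of sign convention, exactly the paper's proof: the paper reduces from the $W[1]$-hard $d$-sum problem with $r=2d-1$, prices $q_i=M-\omega x_i$ and budget $Q=dM-\omega t$ (forcing $|I|\le d$ and, for $|I|=d$, $\sum_{i\in I}x_i\ge t$), and probabilities $p_i=2^{-\omega x_i}$ with $\omega=\lceil\log_2\alpha\rceil+1$, so that the objective of any size-$d$ feasible set collapses to $\prod_{i\in I}p_i=2^{-\omega\sum_{i\in I}x_i}$ and yes- and no-instances are separated by the multiplicative factor $2^{\omega}>\alpha$. Your version with $q_i=a_i+B$, $Q=kB+T$ and $p_i=\beta^{M-a_i}$ is the mirror image and is equally valid. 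The encoding worry you raise is a legitimate observation (and one the paper silently glosses over): with probabilities written as plain binary rationals, the denominator $2^{\omega x_i}$ has $\omega x_i$ bits, which is exponential in the bit-length of $x_i$, and the hard instances of $d$-sum do require superpolynomially large integers. But this does not force abandoning the construction; it is resolved by fixing the input convention that probabilities are dyadic rationals $2^{-e}$ with the exponent $e$ given in binary (every comparison the reduction needs --- feasibility, and whether the returned value exceeds the threshold $2^{-\omega(t+1)}$ --- is a comparison of exponents, so the decision procedure still runs in FPT time). In other words, you talked yourself out of the correct proof over a representational issue that has a one-line fix.

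The genuine gap is in what you put in its place. The replacement reduction from Multicolored Clique with a non-separable tail objective $\Pr\bigl(\sum_{i}\mathrm{Bern}(f_i)\le 2|I^*|-r-1\bigr)$ is only a plan: you yourself identify the ``gap analysis in this non-separable regime'' --- showing that the clique configuration dominates all the exponentially many other success/failure patterns by more than a factor $c$ --- as the technically demanding step, and you do not carry it out. Nor do you specify the vertex/edge gadgets, the budget-and-count pinning, or why $r=\Theta(k)$ suffices there. As submitted, the proposal therefore does not contain a complete proof of the theorem: the route that works is abandoned, and the route that is retained is unexecuted. The fix is simply to return to your first construction, state explicitly the encoding of the $p_i$ as powers of $1/2$ with binary exponents, and finish the (already essentially complete) two-sided gap argument.
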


\begin{proof}[{\bf Proof of Theorem~\ref{thm:hardness}}]
	We leverage the $d$-sum problem, which is known to be $W[1]$-hard (see \cite{downey1992fixed}), and show a reduction from the $d$-sum problem to the KU problem. We first review the $d$-sum problem.
	
	\begin{center}
		\fbox{\begin{minipage}{0.47\textwidth}
				\textbf{The $d$-sum Problem}
				
				Input: $s$ positive integer $x_1,x_2,\cdots,x_s$ and an integer $t$.
				
				Output: Decide whether or not there exists a subset $E\subseteq \{x_1,x_2,\cdots,x_s\}$ of $|E|=d$ elements such that $\sum_{i:x_i\in E} x_i=t$.
				
		\end{minipage}}
	\end{center}
	
	The rest is to show the following reduction: If there is an $\alpha$-approximation algorithm that solves the KU problem in $f(r)n^{O(1)}$ time for some computable function $f$ and some constant $\alpha$, then this algorithm can be used to solve the $d$-sum problem in $f(d)m^{O(1)}$ time.
	This contradicts the $W[1]$-hardness result of the $d$-sum problem mentioned above (\cite{downey1995fixed}).
	
	The details of the reduction follow.
	Given an instance of the $d$-sum problem with $s$ integers $x_1,x_2,\cdots,x_s$, we construct an instance of the KU problem as follows. Let $n'=s$ and $r=2d-1$. Construct $n'$ items in the KU problem with $p_i=\Pr(P_i=1)=2^{-\omega x_i}$ and $q_i=M-\omega x_i$ for $1\leq i \leq n$, where $\omega=\lceil \log_2 \alpha\rceil+1$ and $M=s\omega\sum_{i=1}^sx_i$. Let $Q=dM-\omega t$.
	We make two claims.
	
	\begin{claim}
		\label{claim-new-1}
		If the $d$-sum instance admits a solution, then there exists a  solution to the KU problem with an objective value at least $2^{-\omega t}$.
	\end{claim}
	
	\begin{proof}
		Suppose the $d$-sum problem admits a  solution $E$. Let $I=\{i|x_i\in E\}$ be the index set of items in the solution. We observe that 
		\begin{eqnarray*}
			\sum_{i\in I} q_i&=&dM-\omega\sum_{i\in I}x_i=dM-\omega t=Q, ~\text{and}\\
			\Pr\left(\sum_{i\in I}P_i\ge d\right)&=&\Pr\left(P_i=1, \forall i\in I\right)=\prod_{i\in I}p_i=2^{-\omega t}.
		\end{eqnarray*}
		Hence, there exists a  solution with an objective value at least $2^{-\omega t}$.
		Thus, {\bf Claim \ref{claim-new-1}} holds.
	\end{proof}
	
	\begin{claim}
		\label{claim-new-2}
		If the $d$-sum instance does {\em not} admit a solution, then any  solution to the KU problem has an objective value at most $2^{-\omega (t+1)}<1/\alpha\cdot 2^{-\omega t}$.
	\end{claim}
	
	\begin{proof}
		Suppose the $d$-sum problem does not admit a solution. 
		Note that for any solution $I$ to the KU problem, we have $|I|\le d$; otherwise, $|I|\ge d+1$ leads to $$\sum_{i\in I} q_i\ge (d+1)M-\omega\sum_{i\in I}x_i>dM>Q,$$
		which contradicts that $I$ is a  solution. We split $|I|\le d$ into two scenarios: $|I|\le d-1$ or $|I|= d$.
		\begin{itemize}
			\item In the case $|I|\le d-1$, {\bf Claim \ref{claim-new-2}} holds because
			\begin{eqnarray*}
				&&\Pr\left(\sum_{i\in I}P_i\ge r+1-|I|\right)\le \Pr\left(\sum_{i\in I}P_i\ge d+1\right)\\
				&=&0<2^{-\omega(t+1)}.
			\end{eqnarray*}
			
			\item In the case $|I|=d$, the fact $\sum_{i\in I}q_i\le Q$ and $q_i=M-\omega x_i$ and $Q=dM-\omega t$ implies $\sum_{i\in I} x_i\ge t$. Since the $d$-sum problem does not admit a  solution, either $\sum_{i \in I} x_i\ge t+1$ or $\sum_{i:x_i\in I} x_i\le t-1$. Given that $\sum_{i\in I} x_i\ge t$, we have $\sum_{i\in I} x_i\ge t+1$. Then, {\bf Claim \ref{claim-new-2}} holds because
			$$\Pr\left(\sum_{i\in I}P_i\ge d\right)=\prod_{i\in I}p_i=2^{-\omega\sum_{i\in I}x_i}\le 2^{-\omega (t+1)}.$$
		\end{itemize}
	\end{proof}
	
	Under {\bf Claims \ref{claim-new-1}}-{\bf \ref{claim-new-2}}, we observe that an $\alpha$-approximation algorithm for the KU problem can be used to solve the $d$-sum problem as follows:
	\begin{itemize}
		\item In the case the $\alpha$-approximation algorithm for the KU porblem returns a feasible solution with an objective value that is $\leq 2^{-\omega (t+1)}$, the optimal objective value is at most $\alpha\cdot 2^{-\omega (t+1)}< 2^{-\omega t}$. {\bf Claim \ref{claim-new-1}} implies that the $d$-sum instance does not admit a feasible solution.
		\item In the case the $\alpha$-approximation algorithm for the KU problem returns a feasible solution with an objective value that is $> 2^{-\omega (t+1)}$, {\bf Claim \ref{claim-new-2}} implies that the $d$-sum instance must admit a feasible solution. 
	\end{itemize}
	Hence, any $\alpha$-approximation algorithm for solving the KU problem can be used to solve the $d$-sum problem. 
	This completes the proof of {\bf Theorem~\ref{thm:hardness}}.
\end{proof}


\begin{coro}
	\label{coro:apx}
	Assuming $P\neq NP$, there does not exist an $O(1)$-approximation algorithm for the KU problem that runs in polynomial time if $r$ is part of the input.
\end{coro}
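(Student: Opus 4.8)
The plan is to reuse, essentially verbatim, the reduction built in the proof of \textbf{Theorem~\ref{thm:hardness}}, but now to regard $d$ (and hence $r=2d-1$) as \emph{part of the input} rather than as a fixed parameter, and to observe that this reduction is computable in polynomial time. Nothing in that construction, nor in \textbf{Claims~\ref{claim-new-1}} and~\textbf{\ref{claim-new-2}}, ever used that $d$ is bounded: given a $d$-sum instance I set $n'=s$, $\omega=\lceil\log_2\alpha\rceil+1$, $M=s\omega\sum_i x_i$, $q_i=M-\omega x_i$, $p_i=2^{-\omega x_i}$, $Q=dM-\omega t$, and $r=2d-1$. Since the desired approximation ratio $\alpha$ is a fixed constant, $\omega$ is a constant, so all of these quantities are computable in time polynomial in the size of the $d$-sum instance (with the exponentially small probabilities stored succinctly via their exponents $\omega x_i$), making this a genuine polynomial-time reduction rather than merely an FPT one.

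First I would record that the two claims transfer unchanged: a \emph{yes}-instance of $d$-sum forces the optimal KU objective to be at least $2^{-\omega t}$, whereas a \emph{no}-instance forces every feasible objective to be at most $2^{-\omega(t+1)}=2^{-\omega t}\cdot 2^{-\omega}<2^{-\omega t}/\alpha$, the last inequality holding because $2^{\omega}\ge 2\alpha>\alpha$. Consequently a polynomial-time $\alpha$-approximation algorithm for KU, run on the constructed instance and compared against the threshold $2^{-\omega(t+1)}$, decides the original $d$-sum instance in polynomial time: on a \emph{yes}-instance it must report a value above the threshold, and on a \emph{no}-instance it cannot.

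Next I would supply the one ingredient not already provided by \textbf{Theorem~\ref{thm:hardness}}, whose hardness source was the \emph{parameterized} ($W[1]$-hard) version of $d$-sum: namely that the unbounded-$d$ version is itself $NP$-hard. This holds because it contains Subset Sum: from an instance $(a_1,\dots,a_n,T)$ one forms the $2n$ positive integers $a_1+1,\dots,a_n+1$ together with $n$ further copies of $1$, and asks for exactly $n$ of them summing to $T+n$. Every one of the $n$ chosen items contributes a $1$, accounting for the $+n$ in the target, so a selection is valid if and only if the $a_i$-parts of the chosen ``$a_i+1$'' items sum to $T$, i.e.\ if and only if the Subset Sum instance is a yes-instance. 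Combining the two pieces, a polynomial-time $O(1)$-approximation for KU with $r$ part of the input would solve an $NP$-hard problem in polynomial time, contradicting $P\neq NP$ and establishing \textbf{Corollary~\ref{coro:apx}}.

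I do not expect the reduction itself to be the obstacle, since it is immediate once the parameter is unfrozen; the two points that genuinely need care are confirming that $d$-sum with unbounded $d$ is $NP$-hard (the cited $W[1]$-hardness does not by itself yield this) and confirming that the KU instance stays polynomial-size despite the probabilities $p_i=2^{-\omega x_i}$, which forces the modeling assumption that such probabilities are encoded succinctly (e.g.\ by their exponents) rather than written out in full.
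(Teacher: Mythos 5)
Your proposal is correct and follows exactly the route the paper intends: the paper states Corollary~\ref{coro:apx} without any proof, implicitly relying on the reduction from the proof of Theorem~\ref{thm:hardness} together with Claims~\ref{claim-new-1} and~\ref{claim-new-2}, and you have simply carried that reduction over with $d$ unfrozen. The two ingredients you add---the $NP$-hardness of $d$-sum with unbounded $d$ (via your cardinality-padded Subset Sum reduction) and the observation that the probabilities $p_i=2^{-\omega x_i}$ must be encoded succinctly by their exponents for the reduction to be polynomial---are exactly the gaps the paper leaves unaddressed, and both are handled correctly.
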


Now we are ready to prove {\bf Theorem \ref{coro:pec-c-u}}.

\begin{proof}[{\bf Proof of Theorem \ref{coro:pec-c-u}}]
	{\bf Lemma \ref{lemma:BU-KU-equivalence}} shows that the KU problem is equivalent to the BVU problem with two candidates. The hardness of the KU problem is established by {\bf Theorem \ref{thm:hardness}} and {\bf Corollary \ref{coro:apx}}. Hence {\bf Theorem \ref{coro:pec-c-u}} holds.
\end{proof}

\section{An Approximation Algorithm in FPT time}
\label{sec:FPT-Approximation-algorithms}

Having showed that the BVU problem is hard, now we present an approximation algorithm for solving it.
The algorithm
runs in FPT time for any fixed constant $m$ and any small constant $\epsilon$.
In terms of approximation ratio, our algorithm returns a value that is
$\geq \OPT-\epsilon$, where $\OPT\in [0,1]$ is the optimal objective value.
Note that the hardness result given by {\bf Theorem~\ref{coro:pec-c-u}} suggests that an additive approximation algorithm is perhaps the best algorithm we can hope for.

\subsection{Algorithmic Result}

\begin{theorem}[Main algorithmic result]\label{thm:BU-alg}
	For an arbitrary small constant $\epsilon>0$,
	there exists an algorithm for the BVU problem, which runs
	in $r^{O(mr/\epsilon)}+n^{O(m^5/\epsilon^{5})}$ time and returns a solution with {an objective value} no smaller than $\OPT-\epsilon$, where
	$\OPT\in [0,1]$ is the optimal objective value.
\end{theorem}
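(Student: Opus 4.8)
The plan is to establish Theorem~\ref{thm:BU-alg} by reducing the BVU problem with a constant number $m$ of candidates to a multi-constraint stochastic knapsack problem, and then solving that problem approximately. First I would generalize the equivalence of Lemma~\ref{lemma:BU-KU-equivalence}: for general $m$, the designated candidate $c_m$ wins precisely when, for every opposing candidate $c_j$ with $j<m$, the number of counted bribed votes exceeds the gap between $c_j$'s (possibly decreased) tally and $c_m$'s tally. Writing $X_i$ for the indicator that bribed voter $v_i$'s vote is counted, and splitting the bribed set according to which $V_j$ each voter is drawn from, the win condition becomes a system of $m-1$ simultaneous threshold inequalities on partial sums of the $X_i$. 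The objective is the probability that all these inequalities hold simultaneously, subject to the budget constraint $\sum_{i\in I^*} q_i\le Q$. This yields the ``new variant of the knapsack problem (involving a stochastic objective and multiple cardinality constraints)'' promised in the introduction.

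Next I would exploit the parameter $r$. Because each candidate $c_j$ with $j<m$ originally leads $c_m$ by at most $r$ (since $c_1$ is the overall winner and $r=|V_1|-|V_m|$), the number of bribed voters that can usefully be drawn from each $V_j$ to overcome that lead is governed by thresholds of size $O(r)$. This lets me guess, for each of the $m-1$ opposing candidates, the number of voters bribed from the corresponding block and the number of their votes that must be counted; there are only $r^{O(m)}$ such guesses, which drives the $r^{O(mr/\epsilon)}$ factor in the running time. For each fixed guess the remaining task is to choose, within each block, a cheapest set of voters whose counted-vote distribution stochastically dominates the required threshold, which I would handle by dynamic programming over budget and cardinality. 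The key analytic step is approximating each block's sum of independent Bernoulli random variables $\sum X_i$ by a normal distribution via the Berry--Essen theorem (Theorem~\ref{theorem:Berry-Essen}): rounding the probabilities $p_i$ and discretizing the distribution introduces a controlled additive error $\delta$ per block, and Corollary~\ref{coro:folklore-sequence} then aggregates these $m$ per-block errors into a global additive error of at most $m\delta$, which I set to $\epsilon$ by choosing $\delta=\epsilon/m$.

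The dynamic program would be indexed by the budget consumed and the number of items selected in the current block, storing for each cell the achievable probability vector (discretized to multiples of $\epsilon/\mathrm{poly}(m)$) that the counted votes meet the threshold. Merging blocks corresponds to convolving these distributions, and Corollary~\ref{coro:folklore-sequence} guarantees that the product-form lower bound $(1-\delta)^m\Pr(\cdot)-m\delta$ survives the merge. The polynomial-in-$n$ cost of the DP table, together with the fine discretization needed to make Berry--Essen's bound $C_0\psi_0$ smaller than $\epsilon$ (which requires the block sizes and variance to be large enough, contributing the $m^5/\epsilon^5$ exponent), accounts for the $n^{O(m^5/\epsilon^5)}$ factor.

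The main obstacle I anticipate is the error analysis for the normal approximation. The Berry--Essen bound $\sup_x|F_n(x)-\Phi(x)|\le C_0\psi_0$ degrades when the variances $\sigma_i^2=p_i(1-p_i)$ are small, i.e. when many $p_i$ are close to $0$ or $1$; controlling $\psi_0$ uniformly requires separating the ``near-deterministic'' voters (which I would handle exactly by absorbing them into the integer thresholds) from the genuinely random ones (where the variance is bounded away from zero and the approximation is valid). Threading this case split through all $m-1$ blocks while keeping the total additive error below $\epsilon$, and ensuring the discretized DP faithfully realizes the stochastic-dominance relations needed to invoke Corollary~\ref{coro:folklore-sequence}, is where the real work lies; the running-time bookkeeping is comparatively routine once the discretization granularity is fixed.
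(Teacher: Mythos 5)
Your overall architecture matches the paper's: reduce BVU to a stochastic knapsack with per-block cardinality constraints by guessing the per-block thresholds (the paper guesses the value $\xi(I^*)\in\{-1,0,\dots,r\}$ and the tight block $j_0$, costing only $O(rm)$ enumerations), split voters into near-deterministic and genuinely random ones, invoke Berry--Essen only where the variance is large enough, and aggregate the per-block errors via Corollary~\ref{coro:folklore-sequence}. The two scenarios you would need for the random items (a pmf-level dynamic program when the selected block has small total mean, a moment-matching knapsack DP when Berry--Essen applies) are also present in the paper as Lemmas~\ref{lemma:dp} and \ref{lemma:small-2}.

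The genuine gap is in your treatment of the near-deterministic (``big'') voters, those with $p_i>1-\epsilon^2$. You propose to ``handle them exactly by absorbing them into the integer thresholds,'' i.e., to treat their votes as certainly counted. If the optimal solution uses $h$ such voters, the probability that all of them are actually counted is only bounded below by $(1-\epsilon^2)^h\approx 1-h\epsilon^2$, and since $h$ can be as large as $\Theta(k)=\Theta(r)$, the resulting additive error is $\Theta(r\epsilon^2)$, which is not $O(\epsilon)$ when $r\gg 1/\epsilon$. The paper instead (i) shows via Markov's inequality that if some block of the optimum contains at least $2k$ big items you can win with probability $1-\epsilon$ by greedily taking cheap big items (Lemma~\ref{obs:1}), and (ii) otherwise rounds the big-item probabilities multiplicatively to $O(mk/\epsilon)$ classes with granularity $\delta=\epsilon/(mk)$ and enumerates the number of optimal items in each class, so that the cumulative multiplicative error over at most $2k$ items is $(1-\delta)^{2k}\ge 1-2\epsilon/m$ (Lemma~\ref{lemma:fpt-big}). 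This enumeration is precisely what produces the $r^{O(mr/\epsilon)}$ term in the running time; your accounting attributes that term to guessing per-block counts, which costs only $r^{O(m)}$, so the missing idea shows up as an unexplained factor in your own time bound. A secondary, fixable issue: discretizing the stored probabilities to multiples of $\epsilon/\mathrm{poly}(m)$ is too coarse, since the DP accumulates rounding error over all $n$ items; the paper uses granularity $\eta=\epsilon/(mn^2)$ for this reason.
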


In order to prove {\bf Theorem \ref{thm:BU-alg}}, we need to design an approximation algorithm for the BVU problem. For this purpose, we define a new variant of the Knapsack problem.

\smallskip

\noindent\textbf{The MKU Problem.}
The MKU problem deals with items that have deterministic sizes but random profits and involves a stochastic objective function,
and the goal is to maximize a certain ``overflow" probability under the knapsack's volume and cardinality constraints.
More specifically, the MKU problem is defined as follows:

\begin{center}
	\fbox{\begin{minipage}{0.47\textwidth}
			\textbf{Multi-block Knapsack with Uncertainty (MKU)}
			
			Input: A knapsack of capacity $Q$; a set of items $\mathcal{V}=\{v_1,v_2,\cdots, v_n\}$, with each item associated with a size $q_i$ and a profit $P_i$, which is an independent random variable such that $\Pr(P_i=1)=p_i$ and $\Pr(P_i=0)=1-p_i$; a partition of the $n$ items into a constant $m\ge 2$ subsets $V_1,V_2,\cdots,V_{m}$, and a quota $\Delta_j$ for each $V_j$ such that $\Delta_j\le r+1$ for some positive integer $r$; a positive integer $k$ such that $k\le r+1$; a positive index $1\le j_0\le m-1$.
			
			Output: Find a set of indices $I^*\subseteq \{1,2,\cdots,n\}$ such that
			\begin{itemize}
				\item $\sum_{i\in I^*} q_i\le Q$,
				\item $|V(I^*)\cap V_j|\ge \Delta_j$ for all $1\le j\le m-1$ and $j\neq j_0$,
				\item $|V(I^*)\cap V_{j_0}|= \Delta_{j_0}$,
				\item $\Pr\left(\sum_{i\in I^*} P_i\ge k\right)$ is maximized,
			\end{itemize}
			where $V(I^*)=\{v_i|i\in I^*\}$.
	\end{minipage}}
\end{center}

Note that in the preceding definition, we intentionally make the parameters of the MKU problem correspond to the parameters of the BVU problem exactly, because we intend to reduce the number of notations used in this paper (for better readability).
That is, parameters $n$, $m$, $V_j$, $Q$, $p_i$, $q_i$, $r$ and $I^*$ in the BVU problem correspond to the same parameters
in the MKU problem.
We will use the problem context to distinguish the meanings of these parameters.
Because of this, we say an instance of the MKU problem corresponds to an instance of the BVU problem when they have the same set of parameter values.

Now we show that the BVU problem can be solved efficiently by utilizing an algorithm for the MKU problem.

\begin{theorem}
	\label{thm:reduction-mku}
	Let $\epsilon>0$ be an arbitrary small constant.
	Denote by $OPT_{BVU}$ and $OPT_{MKU}$ the optimal objective value of the BVU problem and the MKU problem, respectively.
	A feasible solution to the BVU problem with
	an objective value at least $OPT_{BVU}-\epsilon$ can be found in $O(rm\Lambda)$ time, where $\Lambda$ is the time for finding a {feasible} solution to the {corresponding} MKU problem with the objective value at least $OPT_{MKU}-\epsilon$.
\end{theorem}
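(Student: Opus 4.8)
The plan is to solve the BVU problem by reducing it to $O(rm)$ instances of the MKU problem, one for each guess of two quantities: the \emph{binding} opponent and the \emph{overflow threshold}. First I would translate the winning condition. Fix a bribe set with indices $I^*$ and write $n_j=|V(I^*)\cap V_j|$ for $j<m$. After bribing, candidate $c_j$ ($j<m$) retains $|V_j|-n_j$ votes while $c_m$ receives $|V_m|+\sum_{i\in I^*}P_i$ votes, where $P_i$ is the indicator that a bribed vote is counted. Thus $c_m$ wins exactly when $\sum_{i\in I^*}P_i\ge \max_{1\le j<m}(|V_j|-n_j)-|V_m|+1$, so the winning probability equals $\Pr(\sum_{i\in I^*}P_i\ge k)$ with a threshold $k$ governed by the opponent whose \emph{residual} count $|V_j|-n_j$ is largest. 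The difficulty is precisely this coupling: one random sum must simultaneously beat every opponent, and the threshold is dictated by the maximum residual.

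The plan handles the coupling by guessing. I would enumerate the binding opponent $j_0\in\{1,\dots,m-1\}$ and the residual value $\rho=k+|V_m|-1$ it attains; since $|V_j|-n_j\le|V_1|$ forces $k\le r+1$, there are $O(r)$ relevant thresholds and $O(mr)$ guesses in total. For each guess I build the corresponding MKU instance on the bribable voters $V\setminus V_m$ (the block $V_m$ carries no selectable voter, so every MKU solution is automatically a valid bribe set), keeping the same budget $Q$ and objective $\Pr(\sum_{i\in I^*}P_i\ge k)$, imposing the \emph{equality} quota $\Delta_{j_0}=|V_{j_0}|-\rho$ to pin $c_{j_0}$'s residual to exactly $\rho$, and the \emph{inequality} quotas $\Delta_j=\max(0,|V_j|-\rho)$ for $j\ne j_0$ to force every other residual to be at most $\rho$. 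Solving each instance within additive error $\epsilon$ costs $\Lambda$, so the whole procedure runs in $O(rm\Lambda)$ time.

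For correctness I argue in both directions. Any MKU-feasible solution satisfies $\max_{j<m}(|V_j|-n_j)=\rho$ by the equality-plus-inequality quotas, so its true BVU winning probability coincides \emph{exactly} with its MKU objective $\Pr(\sum_{i\in I^*}P_i\ge k)$; hence the best solution returned is a genuine BVU solution whose value I control. Conversely, taking $j_0^*$ to be the binding opponent of an optimal BVU solution and $k^*$ its threshold, that solution is feasible for the matching MKU instance with MKU objective equal to $OPT_{BVU}$, so $OPT_{MKU}\ge OPT_{BVU}$ on that instance; the additive-$\epsilon$ MKU solver then returns a value $\ge OPT_{BVU}-\epsilon$, which transfers verbatim to a BVU solution of the same value.

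I expect the main obstacle to be making this correspondence exact rather than merely one-sided: the equality constraint on $j_0$ is what pins the realized threshold to $k$, so that the MKU objective equals the real winning probability instead of only lower-bounding it (without it, a solution that over-bribes some block would have a smaller true threshold and a larger true winning probability than its MKU value reports). A little extra care is needed for the degenerate sure-win case, where all residuals can be pushed below $|V_m|$, i.e. $k\le 0$ and $OPT_{BVU}=1$, which I would dispatch by a separate deterministic minimum-cost coverage check; the budget and item sizes carry over directly because $Q$ and the $q_i$ are shared between the two problems.
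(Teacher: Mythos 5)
Your proposal is correct and follows essentially the same route as the paper's proof: reformulating the winning condition via the maximum residual, guessing the binding opponent $j_0$ and its residual value (the paper's $\xi(I^*)=\alpha$), and encoding these as one equality quota and $m-2$ inequality quotas in an MKU instance, for $O(rm)$ oracle calls. Your explicit treatment of the sure-win case $k\le 0$ is a welcome bit of extra care that the paper handles only implicitly via its convention $\xi(I)=-1$.
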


\begin{proof}
	Let $I$ be an arbitrary solution to the BVU problem, $V(I)=\{v_i|i\in I\}$ and $V'_j=V_j\cap V(I)$. 
	For any $v_i\in {V'_j}$, we define $X_i^{j}$ to be a binary random variable indicating whether $v_i$ votes or not if it is bribed, i.e., $\Pr(X_i^j=1)=p_i$ and $\Pr(X_i^j=0)=1-p_i$. 
	
	For $j=1,\ldots,m-1$, if $v_i\in V_j$ is bribed { (i.e., $v_i\in V'_j$)}, then there are two scenarios:
	\begin{itemize}
		\item The bribery succeeds, meaning that the number of votes received by $c_j$ decreases by 1 and the number of votes received by $c_{m}$ increases by 1.
		\item The bribery fails, meaning that    the number of votes received by $c_j$ decreases by 1 but the number of votes received by $c_{m}$ remains unchanged. 
	\end{itemize}
	Let $Y_j$ be the total number of votes received by $c_j$ after bribing. Then, we have 
	\begin{subequations}
		\begin{eqnarray*}
			Y_j  &=&|V_j|-|V_j'|, ~~~~1 \leq j \leq m-1;\\
			Y_{m}&=&|V_{m}|+\sum_{1 \leq j \leq m-1} ~ \sum_{v_i\in V_j'} X_i^j.
		\end{eqnarray*}
	\end{subequations}
	Note that $Y_j$ for $1 \leq j\leq m-1$ is a deterministic value, rather than a random variable, because no matter the bribery of $v_i\in V_j'$ succeeds or not, $c_j$ always loses the vote of $v_i$. 
	The probability that the designated candidate $c_m$ becomes the winner is: 
	
	\begin{eqnarray*}
		&&\Pr\left(Y_j<Y_{m},  1\leq j\leq m-1\right)\\
		&=&\Pr\left(\sum_{1 \leq \ell \leq m-1}~\sum_{i:v_i\in V_\ell'} X_i^\ell>|V_j|-|V_j'|-|V_{m}|,  1 \leq j \leq m-1\right)\\
		&=& \Pr\left(\sum_{1 \leq \ell \leq m-1}~\sum_{i:v_i\in V_\ell'} X_i^\ell>\max_{1 \leq j \leq m-1}\left(|V_j|-|V_j'|-|V_{m}|\right)\right).
	\end{eqnarray*}
	
	We observe that this probability is only dependent on the value of $\max\limits_{1 \leq j \leq m-1}\left(|V_j|-|V_j'|-|V_{m}|\right)$ and that $X_i^j\ge 0$ for $1 \leq j \leq m-1$. If $\max\limits_{1 \leq j \leq m-1}\left(|V_j|-|V_j'|-|V_{m}|\right)\le -1$, then we have
	\begin{eqnarray*}
		&&\Pr\left(\sum_{1 \leq j \leq m-1}~\sum_{i:v_i\in V_j'} X_i^j>\max_{1 \leq j \leq m-1}\left(|V_j|-|V_j'|-|V_{m}|\right)\right)\\
		&=&\Pr\left(\sum_{1 \leq j \leq m-1}~\sum_{i:v_i\in V_j'} X_i^j>-1\right)=1.
	\end{eqnarray*}
	For any solution $I$ to the BVU problem, we define:
	\begin{eqnarray*}
		\xi(I)=
		\left\{
		\begin{array}{ll}
			\max\limits_{1 \leq j \leq m-1}\left(|V_j|-|V_j'|-|V_{m}|\right), & \\
			~~~~~~~~~~~~~~~~\text{if}~ \max\limits_{1 \leq j \leq m-1}\left(|V_j|-|V_j'|-|V_{m}|\right)\ge 0; \\
			-1, ~~~~~~~~~~~ \text{otherwise}
		\end{array}
		\right. 
	\end{eqnarray*}
	Then our previous arguments show that the BVU problem can be reformulated as follows: 
	\begin{center}
		\fbox{\begin{minipage}{0.47\textwidth}
				Find $I^*\subseteq \{1,2,\cdots,n\}$ such that $\sum_{i\in I} q_i\le Q$ and $\Pr\left(\sum\limits_{1 \leq j \leq m-1}~\sum\limits_{i:v_i\in \in V_j\cap V(I^*)} X_i^j\ge \xi(I^*)+1\right)$ is maximized.
		\end{minipage}}
	\end{center}
	Recall that candidate $c_1$ is the winner in the absence of bribery and $|V_{m}|=|V_1|-r$. Then, for any $1 \leq j \leq m-1$, we have $|V_j|\le |V_{m}|+r$ and therefore $\xi(I)\le r$ for any feasible solution $I$, leading to $\xi(I)\in\{-1,0,1,\cdots,r\}$, and in particular $\xi(I^*)\in\{-1,0,1,\cdots,r\}$.
	Hence, we can guess the value of $\xi(I^*)$. When we guess the value of $\xi(I^*)$ correctly, say, $\xi(I^*)=\alpha$, then the BVU problem is equivalent to finding some $I^*$ such that $\sum_{i\in I^*} q_i\le Q$, $\xi(I^*)= \alpha$ and 
	$$\Pr\left(\sum_{1 \leq j \leq m-1}~\sum_{i:v_i\in V_j'} X_i^j\ge \alpha+1\right)$$ 
	is maximized. By definition, $\xi(I^*)= \alpha$ holds if and only if the following two conditions are simultaneously satisfied:
	\begin{itemize}
		\item {\bf Condition 1:} There exists some $j_0\in\{1,\ldots,m-1\}$ such that $|V_{j_0}|-|V_{j_0}'|-|V_{m}|= \alpha$.
		\item {\bf Condition 2:} For any $j\in\{1,\ldots,m-1\}$, we have $|V_j|-|V_j'|-|V_{m}|\le \alpha$.
	\end{itemize}
	Let us define
	\begin{eqnarray*}
		\Delta_j=
		\left\{
		\begin{array}{ll}
			|V_j|-|V_{m}|-\alpha & \text{if}~~ |V_j|-|V_{m}|-\alpha\ge 0;\\
			0 & \text{otherwise}.
		\end{array}
		\right.
	\end{eqnarray*}
	Then, the preceding {\bf Conditions 1} and {\bf 2} are equivalent to $|V_{j_0}\cap V(I)|=|V_{j_0}'|=\Delta_{j_0}$ and $|V_j\cap V(I)|\ge \Delta_j$, {respectively}. 
	
	Hence, when we guess $\xi(I^*)$ and $j_0$ correctly, the BVU problem is exactly the same as the MKU problem, whereas a (near-)optimal solution to the MKU problem implies a (near-)optimal solution to the BVU problem.
	Since guessing $\xi(I^*)$ and $j_0$ takes $O(rm)$ enumerations, we can solve the BVU problem by having oracle access to an algorithm that solves the MKU problem. 
	Theorem~\ref{thm:reduction-mku} is proved.
\end{proof}

\begin{proof}[{\bf Proof of Theorem~\ref{thm:BU-alg}}]
	{\bf Theorem \ref{thm:reduction-mku}}, which is stated and proven below, shows that a (approximate) solution of the BVU problem can be found in polynomial oracle-time, by utilizing a (approximation) algorithm for the MKU problem as an oracle.
	The remaining task is to design an approximate algorithm for solving the MKU problem, which is quite involved and suggests us to 
	use the ``divide and conquer'' strategy by considering two cases separately ({\bf Theorem \ref{thm:fpt-alg}}).
\end{proof}

Now we show that there is an approximate algorithm for solving the MKU problem.
\begin{theorem}[algorithm for solving the MKU problem]
	\label{thm:fpt-alg}
	For any arbitrary small constant $\epsilon>0$, 
	there exists an algorithm for the MKU problem that runs
	in $r^{O(mr/\epsilon)}+n^{O(m^5/\epsilon^{5})}$ time and returns a solution with an objective value that is no smaller than $\OPT-\epsilon$, {where
		$\OPT\in [0,1]$ is the optimal objective value in the MKU problem.}
\end{theorem}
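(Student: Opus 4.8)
The plan is to establish Theorem~\ref{thm:fpt-alg} by a divide-and-conquer on the total variance $\sigma^2=\sum_{i\in I^*}p_i(1-p_i)$ of the (unknown) optimal selection $I^*$, splitting at a threshold $\tau=\Theta(C_0^2/\epsilon^2)$ with $C_0$ the Berry-Esseen constant of Theorem~\ref{theorem:Berry-Essen}. Since the algorithm cannot know $\sigma^2$ in advance, I would run one subroutine tailored to each regime and return the better of the two feasible solutions; it then suffices that the subroutine matching the true regime of $I^*$ attains value $\geq\OPT-\epsilon$. This also explains the additive form of the running time: one term per regime.

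For the \emph{high-variance} regime $\sigma^2\geq\tau$, I would invoke Theorem~\ref{theorem:Berry-Essen}. For a centered Bernoulli profit one checks $\rho_i\leq\sigma_i^2$ (since $(1-p_i)^2+p_i^2\leq 1$), so $\psi_0\leq(\sum\sigma_i^2)^{-1/2}=1/\sigma\leq\epsilon/C_0$, whence $\Pr(\sum_{i\in I}P_i\geq k)$ differs from the Gaussian tail $\Phi\!\bigl((\mu-k)/\sigma\bigr)$, with $\mu=\sum_{i\in I}p_i$, by at most $O(\epsilon)$. The objective thereby collapses to a function of the two moments $(\mu,\sigma^2)$. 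A short sensitivity estimate shows $\Phi\!\bigl((\mu-k)/\sigma\bigr)$ moves by at most $\epsilon$ under an additive change of $\mu$ of size $O(\epsilon\sigma)=\Omega(1)$ and a relative change of $\sigma^2$ of size $O(\epsilon)$, so it is enough to discretize $\mu$ into additively-spaced buckets and $\sigma^2$ into geometrically-spaced buckets. I would then enumerate, over the $m$ blocks, the per-block profiles recording (quota status capped at $\Delta_j\le r+1$, mean bucket, variance bucket) at the precision forced below, and for each global profile compute by dynamic programming the \emph{minimum} total size $\sum q_i$; a profile is feasible exactly when this minimum is $\leq Q$ and the quotas $|V(I^*)\cap V_j|\ge\Delta_j$ (with equality at $j_0$) hold. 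Scanning all feasible profiles and returning the one maximizing the Gaussian objective yields a solution within $\epsilon$ of the optimum in this regime, with the profile count over $m$ blocks at the required precision producing the $n^{O(m^5/\epsilon^5)}$ factor.

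For the \emph{low-variance} regime $\sigma^2<\tau$, split the items into ``high'' ($p_i>\tfrac12$) and ``low'' ($p_i\le\tfrac12$). The variance bound forces $\sum_{\mathrm{high}}(1-p_i)\le 2\tau$ and $\sum_{\mathrm{low}}p_i\le 2\tau$, so the high items almost surely all succeed and the low items almost surely all fail; by Markov's inequality~\eqref{eq:markov-inequality}, once the number of selected high items exceeds the threshold $k\le r+1$ by a margin bounded in terms of $r$ and $1/\epsilon$, the overflow probability is within $\epsilon$ of $1$. Hence the overflow event $\sum P_i\ge k$ is controlled, up to error $O(\epsilon)$, by a bounded number of ``critical'' items straddling $k$. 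I would first round probabilities into $O(1/\epsilon)$ types within each block (greedily taking smallest-size items within a type for the knapsack), then guess the type-counts of the critical items and their distribution across the $m$ blocks and solve the residual packing exactly; because the quotas $\Delta_j\le r+1$ and $k\le r+1$ bound these counts by a function of $r,m,1/\epsilon$ independent of $n$, the enumeration runs in $r^{O(mr/\epsilon)}$ time.

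Throughout, the per-block approximations must be stitched into a bound on the single overflow probability of the full sum, which is precisely what Corollary~\ref{coro:folklore-sequence} supplies: it converts $m$ guarantees of the form $\Pr(Y_j\ge h)\ge(1-\delta)\Pr(Z_j\ge h)-\delta$ into a global guarantee with loss $m\delta$, forcing the rounding precision to scale like $\delta=\Theta(\epsilon/m)$ and thereby inflating the per-block profile count. I expect the main obstacle to be this non-trivial Berry-Esseen step: one must simultaneously ensure the selected variance is large enough for the Gaussian to be $\epsilon$-accurate, discretize $(\mu,\sigma^2)$ finely enough to preserve the objective yet coarsely enough to keep the profile count sub-exponential, and enforce the knapsack bound and all $m$ cardinality constraints through the minimum-size dynamic program without a pseudo-polynomial dependence on $Q$. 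Balancing these competing requirements, and verifying that the two regimes together exhaust every instance, is the crux of the argument.
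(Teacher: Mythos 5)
Your high-variance half is sound and in fact mirrors the paper's treatment of the ``large $\sum p_i$'' sub-case (Lemmas~\ref{lemma:transform}--\ref{lemma:small-2}): the bound $\rho_i\le\sigma_i^2$, the reduction of the objective to the two moments $\bigl(\sum p_i,\sum p_i(1-p_i)\bigr)$ via Theorem~\ref{theorem:Berry-Essen} and the sensitivity estimate for $\Phi$ (the paper's Claim~\ref{obs:normal}), and a minimum-size dynamic program over discretized moment profiles are all essentially what the paper does. The use of Corollary~\ref{coro:folklore-sequence} to stitch per-block guarantees at precision $\Theta(\epsilon/m)$ also matches.

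The genuine gap is in your low-variance regime. You claim that once $\sum_{i\in I^*}p_i(1-p_i)<\tau$, the overflow event is ``controlled, up to error $O(\epsilon)$, by a bounded number of critical items straddling $k$,'' and that guessing their type-counts gives an $r^{O(mr/\epsilon)}$ enumeration. This does not hold: the MKU problem has no upper cardinality bound, so the optimal selection may contain $\Theta(n)$ low items with tiny $p_i$ (their total $\sum p_i\le 2\tau$ but their number unbounded in $r,m,1/\epsilon$), and likewise $\Theta(n)$ ``high'' items $p_i>\tfrac12$ with varying $1-p_i$. The law of $\sum P_i$ near the threshold $k$ depends on the entire multiset of selected probabilities, not on a bounded critical set, and rounding into $O(1/\epsilon)$ types cannot represent probabilities ranging over $[1/\mathrm{poly}(n),\tfrac12]$; your Markov step only disposes of the sub-case where the number of selected high items exceeds $k$ by $\Omega(\tau/\epsilon)$ \emph{and} the selected high items have small aggregate $\sum(1-p_i)$, which the algorithm cannot guarantee for its own (different) selection when the ``high'' threshold is only $\tfrac12$. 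The paper avoids both problems: it calls an item big only when $p_i>1-\epsilon^2$, so that \emph{any} $2k$ big items satisfy Markov (Lemma~\ref{obs:1}), which is what justifies the $r^{O(mr/\epsilon)}$ guessing step for big items (Lemma~\ref{lemma:fpt-big}, where the count per block is at most $2k-1\le 2r+1$); and for the remaining small items with small $\sum p_i$ it runs a dynamic program that tracks the full rounded distribution vector $\bigl(\Pr(P_U=0),\dots,\Pr(P_U=\zeta-1)\bigr)$ with $\zeta=(m/\epsilon)^5$ at additive granularity $\epsilon/(mn^2)$ (Lemma~\ref{lemma:dp}), which is where the $n^{O(m^5/\epsilon^5)}$ term actually comes from. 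To repair your argument you would need to replace the ``critical items'' enumeration with such a distribution-tracking DP (accepting the $n$-based running time there), or tighten your high/low split to $1-\epsilon^2$ so that Markov certifies the algorithm's own selection.
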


\begin{proof}[{\bf Proof of Theorem~\ref{thm:BU-alg}}]
	By putting {\bf Theorem \ref{thm:reduction-mku}} and {\bf Theorem \ref{thm:fpt-alg}} together, we obtain {\bf Theorem~\ref{thm:BU-alg}}.
\end{proof}

\subsection{The Proof of Theorem \ref{thm:fpt-alg}}
~\label{subsec:MKP}

The main difficulty originates from the maximization of a probability involving the sum of random variables, which does not have a simple explicit expression. A natural idea is to approximate the summation of random variables with a Gaussian variable via Berry-Essen's Theorem. However, such an approximation is not always achievable because the condition in Berry-Essen's Theorem does not necessarily hold. Furthermore, even if Berry-Essen's Theorem is applicable, bounding the tail probability of a Gaussian variable together with a set of other constraints required in MKP is still challenging. Figure \ref{fig:illustration} highlights the proof strategy for overcoming these difficulties.

\begin{figure}[!htbp]
	\centering
	\fbox{
		\begin{minipage}{.45\textwidth}
			\centering
			\includegraphics[width=\textwidth]{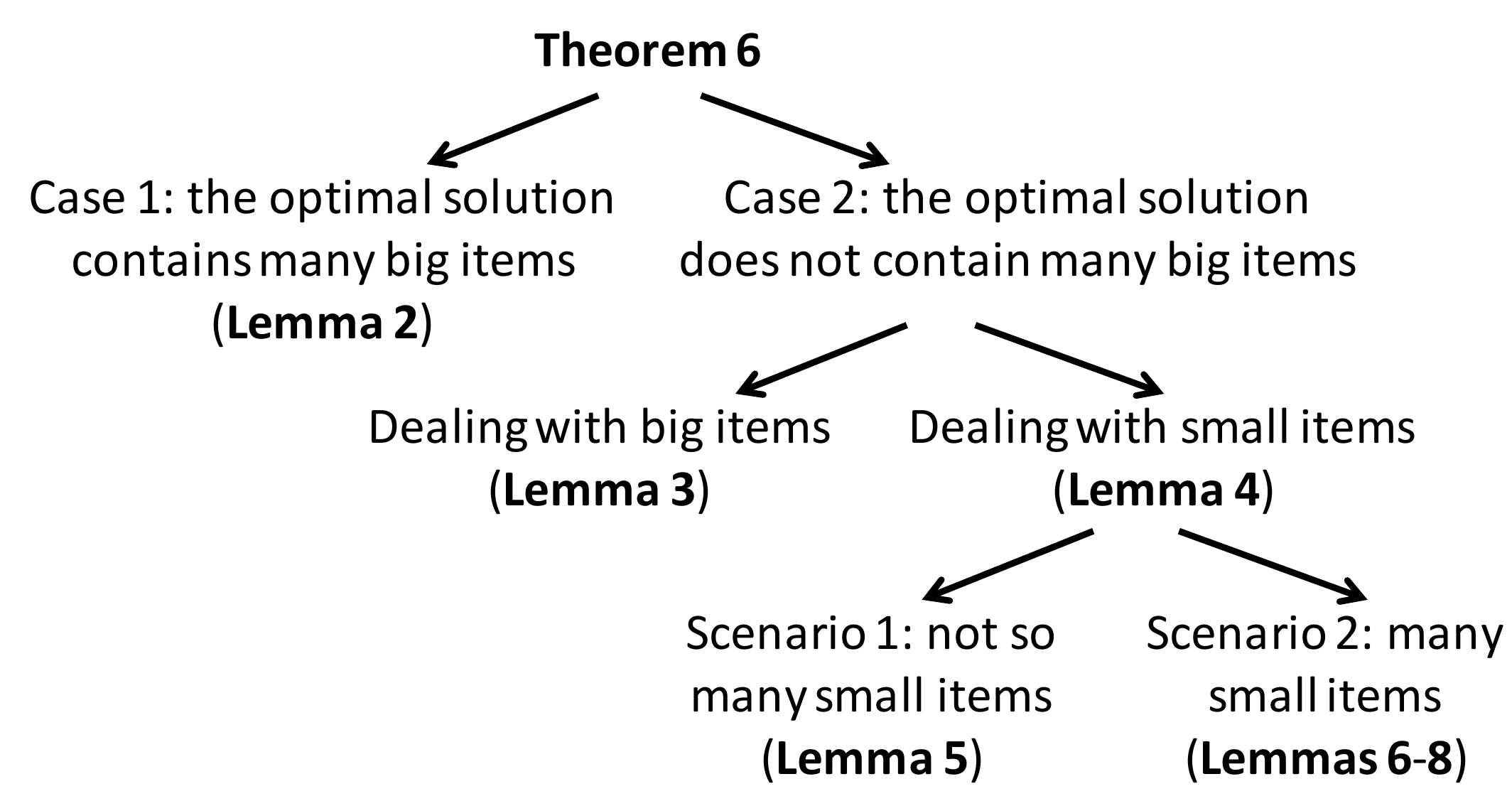}
			\caption{Strategy for proving {\bf Theorem \ref{thm:fpt-alg}}.\label{fig:illustration}}
		\end{minipage}
	}
\end{figure}

Specifically, We partition the set of items into {\em big} and {\em small} ones based on their probability. Then,
we differentiate the case that the optimal solution contains many big items ({\bf Case 1}),
which is easily coped with by using Markov's inequality ({\bf Lemma \ref{obs:1}}), from the case that the optimal solution does not contain many big items ({\bf Case 2}),
whose treatment is much more complicated and proceeds as follows.
\begin{itemize}
	\item First, we apply {\bf Corollary~\ref{coro:folklore-sequence}} to decompose the MKU problem in {\bf Case 2} into a series of sub-problems, each of which is a stochastic knapsack problem with one cardinality constraint.
	\item Second, for big items ({\bf Lemma~\ref{lemma:fpt-big}}), we round their probability to $O(k/\epsilon)\le O(r/\epsilon)$ distinct probabilities. This allows us to guess the number of big items corresponding to the rounded probabilities in the optimal solution, leading to the selection of the optimal subset of big items.
	\item Third, for small items ({\bf Lemma~\ref{lemma:small}}), there are two scenarios:
	{
		\begin{itemize}
			\item In the scenario where the optimal solution does not contain a large volume of small items, we present a dynamic programming algorithm ({\bf Lemma~\ref{lemma:dp}}). 
			\item In the scenario where the optimal solution contains a large volume of small items, Berry-Essen's Theorem is applicable and we can use it to transform the problem of maximizing a specific probability to the problem of approximating the summation of moments of random variables in the optimal solution. Since the moments of a random variable are deterministic, we can leverage the technique for solving the classic knapsack problem ({\bf Lemmas~\ref{lemma:transform}-\ref{lemma:small-2}}).
		\end{itemize}
	}
\end{itemize}


\begin{definition}[big vs. small items]
	Under the assumption that $\epsilon>0$ is a small constant such that $1/\epsilon\ge 4$ is an integer, we say an item in the MKU problem is {\em big} if $p_i>1-\epsilon^2$ and is {\em small} otherwise.
\end{definition}


\begin{lemma}[the case the optimal solution containing many big items]\label{obs:1}
	If $|T\cap B_{j^*}|\ge 2k$, then there is a polynomial-time algorithm that returns a solution $I$ to the MKU problem such that
	\begin{itemize}
		\item $\sum_{i\in I} q_i\le Q$,
		\item $|V(I)\cap V_j|\ge \Delta_j$ for $j\neq j_0$,
		\item $|V(I)\cap V_{j_0}|= \Delta_{j_0}$, and
		\item $\Pr\left(\sum_{i\in I} P_i\ge k\right)\ge 1-\epsilon$.
	\end{itemize}
\end{lemma}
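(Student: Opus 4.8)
The plan is to show that, in this regime, the optimal value is already at least $1-\epsilon$ and that a solution attaining it can be assembled by a purely combinatorial, cost-minimizing selection, with the stochastic objective handled ``for free'' by Markov's inequality.

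First I would establish the probabilistic core: any $2k$ big items force the overflow event with probability at least $1-\epsilon$, regardless of which big items they are. Let $S$ be any set of $2k$ big items and let $W=\sum_{i\in S}(1-P_i)$ count the failures among them, so that $\sum_{i\in S}P_i=2k-W$ and the event $\{\sum_{i\in S}P_i\ge k\}$ equals $\{W\le k\}$. Since each $i\in S$ is big we have $1-p_i<\epsilon^2$, hence $\EE(W)=\sum_{i\in S}(1-p_i)<2k\epsilon^2$. Applying Markov's inequality \eqref{eq:markov-inequality} to the non-negative integer variable $W$ gives $\Pr(W\ge k)\le \EE(W)/k<2\epsilon^2\le\epsilon$, where the last inequality uses $\epsilon\le 1/4$. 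Therefore $\Pr(\sum_{i\in S}P_i\ge k)\ge 1-\Pr(W\ge k)\ge 1-\epsilon$. Because the $P_i$ are non-negative, enlarging the selected set only enlarges the sum, so any feasible $I\supseteq S$ satisfies $\Pr(\sum_{i\in I}P_i\ge k)\ge\Pr(\sum_{i\in S}P_i\ge k)\ge 1-\epsilon$. Consequently, once a feasible solution is guaranteed to contain $2k$ big items, its objective value is automatically at least $1-\epsilon$, and the probabilistic objective can be dropped from the rest of the argument.

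Second, this reduces the task to a feasibility question: produce a feasible MKU solution that contains at least $2k$ big items in some block. Since $T$ is unknown, I would guess the block $j^*$ in which $T$ packs its $\ge 2k$ big items; there are at most $m$ candidate blocks, costing only a polynomial factor. For a fixed guess $j$ I would compute a minimum-cost index set $I_j$ subject to the MKU cardinality constraints ($|V(I_j)\cap V_{j_0}|=\Delta_{j_0}$ and $|V(I_j)\cap V_\ell|\ge\Delta_\ell$ for $\ell\neq j_0$) together with the extra requirement $|V(I_j)\cap B_j|\ge 2k$, and accept $I_j$ as soon as its cost is at most $Q$. Acceptance is justified because the hypothesis $|T\cap B_{j^*}|\ge 2k$ makes $T$ itself feasible for the constrained problem with guess $j=j^*$, so the minimum cost for that guess is at most the cost of $T$, which is at most $Q$. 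To solve the minimum-cost selection I would exploit that the blocks $V_1,\dots,V_m$ are disjoint, so the problem separates block by block: for each $V_\ell$ with $\ell\notin\{j,j_0\}$ take the $\Delta_\ell$ cheapest items, for $V_{j_0}$ (when $j_0\neq j$) take the $\Delta_{j_0}$ cheapest, and for the distinguished block $V_j$ take the cheapest subset meeting its cardinality bound while containing at least $2k$ big items (first select the $2k$ cheapest big items of $V_j$, then greedily add the cheapest remaining items of $V_j$ until the cardinality requirement is met). Each per-block choice is cost-minimal among all sets satisfying that block's constraints, and since $T\cap V_\ell$ also satisfies those constraints, our cost in each block is at most that of $T$; summing over blocks yields total cost at most $\mathrm{cost}(T)\le Q$. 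The returned $I=I_{j^*}$ is thus feasible, contains $2k$ big items, and by the first step achieves $\Pr(\sum_{i\in I}P_i\ge k)\ge 1-\epsilon$.

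The only mildly delicate point—the ``main obstacle''—is the constrained per-block minimization for $V_j$, where a lower bound on the total cardinality must be reconciled with a lower bound on the number of big items (and, when $j=j_0$, with an exact-cardinality constraint); a short exchange argument shows that the stated ``cheapest $2k$ big items, then cheapest fillers'' greedy is optimal there. Everything else is routine, and the key conceptual move is recognizing that in this regime the stochastic objective decouples entirely from the packing constraints, so Markov's inequality alone certifies the $1-\epsilon$ guarantee while the algorithm only ever solves a deterministic min-cost selection.
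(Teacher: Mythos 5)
Your proof is correct and follows essentially the same route as the paper's: select the $2k$ cheapest big items in $B_{j^*}$, greedily fill the cardinality constraints with the cheapest remaining items in each block, and certify the $1-\epsilon$ objective via Markov's inequality applied to the failure count $\sum_i(1-P_i)$. If anything, your version is slightly more careful, since you apply Markov only to the $2k$ big items and then invoke monotonicity of $\sum_i P_i$ under enlarging the set, whereas the paper states the bound $\EE\bigl(\sum_{i\in I}X_i\bigr)\le 2k\epsilon^2$ for the whole selected set $I$ (fillers included), which only holds after implicitly restricting to the big items.
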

\begin{proof}
	We first select the $2k$ big items with the smallest sizes within $B_{j^*}$. Among the remaining items in each $V_j$, we further select the items of the smallest size to make sure that we have selected at least $\Delta_j$ items in each $V_j$ and exact $\Delta_{j_0}$ items in $V_{j_0}$ -- this can be achieved by a simple greedy strategy, that is, we check if there is any $j$ such that the cardinality constraint is not satisfied yet, and pick items of smallest size in $V_j$ to ensure the cardinality constraint. Let $I$ be the set of items that are selected as such, then obviously we have $|I\cap V_j|\ge \Delta_j$ and $|I\cap V_{j_0}|= \Delta_{j_0}$. Since we always select the smallest items, we have $\sum_{i\in I} q_i\le Q$. That is, the first three conditions required by Lemma \ref{obs:1} are satisfied.
	
	In what follows we show the last condition, namely $\Pr\left(\sum_{i\in I} P_i\ge k\right)\ge 1-\epsilon$.
	Let $X_i=1-P_i$ and $\mu=\EE\left(\sum_{i\in I} X_i\right)\le 2k\epsilon^2$. By applying Markov's inequality Eq. \eqref{eq:markov-inequality}, we have
	\begin{eqnarray*}
		\Pr(P_I<k)&=& \Pr\left(\sum_{i\in I} X_i\ge k+1\right)\\
		&\le & \Pr\left(\sum_{i\in I} X_i\ge \mu \cdot \frac{1}{2\epsilon^2}\right)\\
		&\le & 2\epsilon^2.
	\end{eqnarray*}    
	Hence, $\Pr(P_I\ge k)\ge 1-\epsilon$. The completes the proof of the Lemma \ref{obs:1}.
\end{proof}

\begin{lemma}[dealing with big items in the case the optimal solution not containing many big items]\label{lemma:fpt-big}
	If $|T\cap B_{\ell}|\leq 2k-1$, then there is an algorithm that runs in $k^{O(mk/\epsilon)}\le r^{O(mr/\epsilon)}$ time and returns a set $I\cap B_{\ell}$ of big items such that
	\begin{itemize}
		\item $|I\cap B_{\ell}|=|T\cap B_{\ell}|$,
		\item $Q_{I\cap B_{\ell}}\le Q_{T\cap B_{\ell}}$, and
		\item $\Pr(P_{I\cap B_{\ell}}\ge h)\ge (1-2\epsilon/m)\Pr(P_{T\cap B_{\ell}}\ge h)$ for any $h\ge 0$.
	\end{itemize}
\end{lemma}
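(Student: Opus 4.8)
The plan is to reduce the selection of big items to a \emph{profile-guessing} procedure after discretizing their success probabilities. Fix an optimal solution $T$ of the MKU problem. Since every big item has $p_i\in(1-\epsilon^2,1]$, I would round each big item of $B_\ell$ into one of a small number of \emph{probability classes}: partition $(1-\epsilon^2,1]$ into subintervals of width $\eta:=\epsilon/(2mk)$, so that there are only $\epsilon^2/\eta=O(mk\epsilon)\le O(mk/\epsilon)$ classes, and assign each big item to the class containing its $p_i$. Because $|T\cap B_\ell|\le 2k-1$, the \emph{count vector} recording how many of $T$'s big items lie in each class has all entries at most $2k-1$; hence there are at most $(2k)^{O(mk/\epsilon)}=k^{O(mk/\epsilon)}$ possible count vectors, and I would simply enumerate all of them. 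The per-block slack $2\epsilon/m$ (rather than $2\epsilon$) is chosen so that combining the $m$ blocks later through Corollary~\ref{coro:folklore-sequence} degrades the objective by only $O(\epsilon)$ overall.

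For each guessed count vector I would build a candidate set greedily: within each class take exactly the prescribed number of items of \emph{smallest size}. This costs polynomial time per guess (sort each class by size), so the total running time is $k^{O(mk/\epsilon)}$, which, using $k\le r+1$, is $\le r^{O(mr/\epsilon)}$ as claimed. When the guessed vector equals the true count vector of $T\cap B_\ell$, the resulting set $I\cap B_\ell$ has $|I\cap B_\ell|=|T\cap B_\ell|$ by construction; and since $T\cap B_\ell$ is itself a feasible choice with that count vector while we always pick the cheapest items per class, we get $Q_{I\cap B_\ell}\le Q_{T\cap B_\ell}$. Thus among the enumerated candidates the one built from the correct count vector already satisfies the first two required properties, and the only remaining task is the tail inequality $\Pr(P_{I\cap B_\ell}\ge h)\ge(1-2\epsilon/m)\Pr(P_{T\cap B_\ell}\ge h)$ for all $h\ge 0$.

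For the tail bound I would compare $S:=T\cap B_\ell$ and $S':=I\cap B_\ell$ through single-item swaps. As the two sets have equal counts in every class, fix a bijection $\phi:S\to S'$ pairing items of the same class and transform $S$ into $S'$ one pair at a time, producing $S=S_0,S_1,\dots,S_{|S|}=S'$ where $S_t$ and $S_{t-1}$ differ only in replacing item $i_t$ (probability $p_{i_t}$) by $\phi(i_t)$ (probability $p_{\phi(i_t)}$). By independence, conditioning on the shared items makes $\Pr(P_{S_t}\ge h)$ \emph{linear} in the probability of the swapped item: writing $A=\Pr(\text{shared items sum}\ge h-1)$ and $B=\Pr(\text{shared items sum}\ge h)$ with $A\ge B\ge 0$, the tail equals $B+p\,(A-B)$. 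Hence the per-swap ratio is $\frac{B+p_{\phi(i_t)}(A-B)}{B+p_{i_t}(A-B)}$, which is $\ge 1$ when $p_{\phi(i_t)}\ge p_{i_t}$ and otherwise is at least $1-\frac{p_{i_t}-p_{\phi(i_t)}}{p_{i_t}}\ge 1-\frac{\eta}{1-\epsilon^2}$, since the paired probabilities lie in one class of width $\eta$ and $p_{i_t}>1-\epsilon^2$. Multiplying over the at most $2k$ swaps gives $\Pr(P_{S'}\ge h)\ge\bigl(1-\tfrac{\eta}{1-\epsilon^2}\bigr)^{2k}\Pr(P_S\ge h)\ge(1-2\epsilon/m)\Pr(P_S\ge h)$ for the chosen $\eta$, uniformly in $h$; the degenerate cases where both tails vanish ($h$ large) hold trivially.

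The main obstacle is exactly this last step: securing a clean \emph{multiplicative} guarantee that survives even where the tail is minuscule, for instance near the all-success event where $\Pr(P_S\ge h)=\prod_{i\in S}p_i$ can be exponentially small. Any additive or union-bound estimate would be hopeless there. The decisive idea is that independence makes the tail \emph{exactly linear} in each individual $p_i$, so the per-swap ratio is governed by the \emph{relative} perturbation $\eta/p_i$ rather than by the absolute magnitude of the tail, and this relative perturbation is tiny because all big items are clustered in $(1-\epsilon^2,1]$. Calibrating $\eta$ small enough to yield the $(1-2\epsilon/m)$ factor while keeping the number of classes at $O(mk/\epsilon)$ — so the guessing stays within the budget $k^{O(mk/\epsilon)}$ — is the quantitative crux of the argument.
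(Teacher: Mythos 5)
Your algorithm is the same as the paper's: discretize the big-item probabilities into $O(mk/\epsilon)$ classes (you use an additive grid of width $\epsilon/(2mk)$, the paper a geometric grid with ratio $1+\epsilon/(mk)$ -- interchangeable here since all $p_i\in(1-\epsilon^2,1]$), guess the class counts of $T\cap B_\ell$ in $k^{O(mk/\epsilon)}$ enumerations, and take the smallest-size items in each class, which immediately gives the cardinality and size bounds. The only substantive difference is how you verify the multiplicative tail bound. The paper fixes a class-preserving bijection $\phi$ and compares the point masses $\Pr(P_{T\cap B_\ell}=h)$ and $\Pr(P_{I\cap B_\ell}=h)$ term by term via the product formula $\sum_{J}\prod_{i\in J}p_i\prod_{i\notin J}(1-p_i)$, losing a factor $(1-\delta)^h\ge 1-2\epsilon/m$ on the $p$-factors. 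You instead morph $T\cap B_\ell$ into $I\cap B_\ell$ one swap at a time and exploit that, conditioned on the unswapped items, the tail $B+p(A-B)$ is linear in the swapped item's probability, so each swap costs a ratio of at most $1-(p-p')/p\ge 1-\eta/(1-\epsilon^2)$. Your route is slightly cleaner: it bounds the tail directly rather than the pmf, and it sidesteps the direction issue with the $(1-p_{\phi(i)})$ factors that the paper's inequality $p_{\phi(i)}\le p_i$ quietly assumes. One small bookkeeping point you should make explicit: restrict $\phi$ to a bijection between $(T\setminus I)\cap B_\ell$ and $(I\setminus T)\cap B_\ell$ (possible class by class since the counts agree) so that the intermediate sets $S_t$ in your swap sequence are genuine sets with no repeated items. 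With that said, your argument is correct and the constants work out: $2k$ swaps times $\eta/(1-\epsilon^2)\le 2\eta$ gives the required $1-2\epsilon/m$.
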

\begin{proof}
	{We round the probabilities associated to the big items as follows.}
	Let $\delta=\epsilon/(mk)$ and $\beta=O(1/\delta)=O(mk/\epsilon)$ be the largest integer such that $(1-\epsilon^2)(1+\delta)^{\beta}< 1$. Let
	$$\Gamma_1=\{1-\epsilon^2,(1-\epsilon^2)(1+\delta),\ldots,(1-\epsilon^2)(1+\delta)^{\beta}\}$$
	be the set of rounded probabilities. For each big item, we round its probability $p_i$ down to the nearest value in $\Gamma_1$ and denote it by $\tilde{p}_i$. Note that $\tilde{p}_i\le p_i<\tilde{p}_i(1+\delta)$. Let $B_{\ell}^s$ be the subset of big items such that their associated probabilities are rounded to $(1-\epsilon^2)(1+\delta)^{s}$.
	
	For each $\ell\le O(k/\epsilon)$, we guess the value of $|T\cap B_{\ell}^s|\le O(k)$. There are at most $k^{O(mk/\epsilon)}$ different possibilities on these values. Once we guess $|T\cap B_{\ell}^s|$ correctly for each $\ell$, we select the $|T\cap B_{\ell}^s|$ items that have the smallest size in $B_{\ell}^s$ and let $\bar{B}_\ell^{s}$ denote the set of these items. 
	Set $I\cap B_{\ell}=\cup_{s=0}^{\beta}\bar{B}_\ell^{s}$.
	
	{Now we prove that $I\cap B_{\ell}$ defined above satisfies Lemma~\ref{lemma:fpt-big}, whereas the lemma is proved}. 
	For this purpose, we first observe that $|I\cap B_{\ell}^s|=|T\cap B_{\ell}^s|$ and $I\cap B_{\ell}^s$ always consists of the items with the smallest size in $B_{\ell}^s$, therefore we have $Q_{I\cap B_{\ell}}\le Q_{T\cap B_{\ell}}$. Then, we compare $\Pr(P_{I\cap B_{\ell}}=h)$ and $\Pr(P_{T\cap B_{\ell}}=h)$ for every $h\ge 0$. Let $\phi$ be an arbitrary one-to-one mapping that maps each item in $T\cap B_{\ell}$ to a distinct item in $I\cap B_{\ell}$ for every $j$. We have
	\begin{eqnarray*}
		\Pr(P_{T\cap B_{\ell}}=h)&=&\sum_{I\subseteq T\cap B_{\ell}, |I|=h}\prod_{i\in I}p_i\prod_{i\not\in I}(1-p_i),\\
		\Pr(P_{I\cap B_{\ell}}=h)&=&\sum_{I\subseteq T\cap B_{\ell}, |I|=h}\prod_{i\in I}p_{\phi(i)}\prod_{i\not\in I}(1-p_{\phi(i)}).
	\end{eqnarray*}
	In order to show  $\Pr(P_{I\cap B_{\ell}}=h)\ge (1-2\epsilon/m)\Pr(P_{T\cap B_{\ell}}=h)$, it suffices to show that
	$$\prod_{i\in I}p_{\phi(i)}\prod_{i\not\in I}(1-p_{\phi(i)})\ge (1-2\epsilon/m)\prod_{i\in I}p_i\prod_{i\not\in I}(1-p_i)$$
	for every $I\subseteq T\cap B_{\ell}$ {with} $|I|=h$. According to the way we round probabilities, we have $p_{\phi(i)} \le p_i<p_{\phi(i)}(1+\delta)$, hence $1-p_{\phi(i)}\ge 1-p_i$ and
	$$\prod_{i\in I}p_{\phi(i)}\ge (1-\delta)^{h}\prod_{i\in I}p_i\ge (1-h\delta)\prod_{i\in I}p_i.$$
	For $h\le 2k$, we have $h\delta\le 2\epsilon/m$ and
	$$\Pr(P_{I\cap B_{\ell}}=h)\ge (1-2\epsilon/m)\Pr(P_{T\cap B_{\ell}}=h).$$
	For $h> 2k$, $\Pr(P_{T\cap B_{\ell}}=h)=0$ and the above inequality is trivially true.
	
	This completes the proof of {\bf Lemma~\ref{lemma:fpt-big}}.
\end{proof}

\begin{lemma}[dealing with small items in the case the optimal solution not containing many big items]\label{lemma:small}
	There exists an algorithm that runs in $n^{O(m^5/\epsilon^5)}$ time and returns a feasible solution $I\cap S_\ell$ such that
	\begin{itemize}
		\item $|I\cap S_{\ell}|=|T \cap S_\ell|$
		\item $Q_{I\cap S_\ell}\le Q_{T\cap S_\ell}$
		\item $\Pr(P_{I\cap S_\ell}\ge h)\ge \Pr(P_{T\cap S_\ell}\ge h)-\Theta(\epsilon/m)$
	\end{itemize}
\end{lemma}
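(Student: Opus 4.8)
The plan is to fix the block index $\ell$, restrict attention to its small items $S_\ell$, and produce a subset $I\cap S_\ell\subseteq S_\ell$ that imitates the tail behaviour of the optimal small-item selection $T\cap S_\ell$. I would first guess the cardinality $N_\ell:=|T\cap S_\ell|\le n$ that the optimum uses in this block; this costs only a factor $n$ and discharges the first required condition $|I\cap S_\ell|=|T\cap S_\ell|$ by construction. With $N_\ell$ fixed, the task reduces to selecting $N_\ell$ small items of minimum total size whose profit sum $P_{I\cap S_\ell}=\sum_{i\in I\cap S_\ell}P_i$ satisfies $\Pr(P_{I\cap S_\ell}\ge h)\ge\Pr(P_{T\cap S_\ell}\ge h)-\Theta(\epsilon/m)$ for every $h$; minimizing size automatically yields $Q_{I\cap S_\ell}\le Q_{T\cap S_\ell}$, since $T\cap S_\ell$ is itself a feasible candidate. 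The whole difficulty is the tail condition, and to handle it I would split on the total variance $\sigma_T^2:=\sum_{i\in T\cap S_\ell}p_i(1-p_i)$ carried by the optimal selection, thresholding at $\sigma_T^2=\Theta(m^2/\epsilon^2)$. This is precisely the informal dichotomy ``few versus many small items'': each small item has $1-p_i\ge\epsilon^2$, so a small item with $p_i$ bounded away from $0$ contributes a non-trivial share of variance.

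In the low-variance regime I would invoke the dynamic program of the forthcoming Lemma~\ref{lemma:dp}. Here $\EE(P_{T\cap S_\ell})=\sum p_i\le\sigma_T^2/\epsilon^2$ is $\mathrm{poly}(m/\epsilon)$, so by Markov's inequality~\eqref{eq:markov-inequality} the tail $\Pr(P_{T\cap S_\ell}\ge h)$ drops below $\Theta(\epsilon/m)$ once $h$ exceeds a threshold $H=\mathrm{poly}(m/\epsilon)$; for such $h$ the required inequality is vacuous because its right-hand side is already negative. It therefore suffices to reproduce the distribution of $P_{T\cap S_\ell}$ on the bounded range $\{0,1,\ldots,H\}$, which a DP can do while scanning the items, carrying the number of items chosen together with the induced profit distribution restricted to $\{0,1,\ldots,H\}$ and retaining, for each reachable configuration, the minimum attainable total size. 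Because only $\mathrm{poly}(m/\epsilon)$ values are relevant, the table stays of size $n^{O(\mathrm{poly}(m/\epsilon))}$, within the claimed $n^{O(m^5/\epsilon^5)}$ budget, and the returned selection matches $T\cap S_\ell$ exactly on the relevant range.

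In the high-variance regime I would apply the Berry-Essen theorem (Theorem~\ref{theorem:Berry-Essen}) to the centred variables $Z_i=P_i-p_i$. For a Bernoulli profit the centred third moment is $\rho_i=p_i(1-p_i)\bigl((1-p_i)^2+p_i^2\bigr)\le\sigma_i^2$, so $\psi_0\le\bigl(\sum\sigma_i^2\bigr)^{-1/2}\le\Theta(\epsilon/m)$ exactly when $\sigma_T^2\ge\Theta(m^2/\epsilon^2)$. Consequently $\Pr(P_{T\cap S_\ell}\ge h)$ equals $\Phi\bigl((\mu-h)/\sigma\bigr)$ up to an additive $\Theta(\epsilon/m)$, where $\mu=\sum p_i$ and $\sigma^2=\sum p_i(1-p_i)$; this is the content of the forthcoming Lemma~\ref{lemma:transform}, which converts the stochastic objective into a deterministic one depending only on the first two moments. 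It then suffices to find a minimum-size subset of $S_\ell$ of cardinality $N_\ell$ whose mean and variance approximately match those of $T\cap S_\ell$. Discretizing $(\mu,\sigma^2)$ into $\mathrm{poly}(m/\epsilon)$ buckets and running a classical knapsack-style DP over the items, with state (count, mean-bucket, variance-bucket) and value equal to the minimum total size, realizes this and yields Lemmas~\ref{lemma:transform}--\ref{lemma:small-2}; since $T\cap S_\ell$ occupies some triple of buckets, the min-size solution has size $\le Q_{T\cap S_\ell}$ and reproduces the Gaussian tail within $\Theta(\epsilon/m)$.

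The main obstacle is the error bookkeeping in the high-variance case. I would have to verify that (i) the hypotheses of Berry-Essen genuinely hold for the returned set $I\cap S_\ell$ as well as for $T\cap S_\ell$ -- this is exactly why matching the \emph{variance}, and not merely the mean, is essential, since it forces $I\cap S_\ell$ to also have variance $\ge\Theta(m^2/\epsilon^2)$ -- and that (ii) a small relative perturbation of $(\mu,\sigma^2)$ induces only an $O(\epsilon/m)$ additive change in $\Phi\bigl((\mu-h)/\sigma\bigr)$ uniformly in $h$, which requires controlling the Lipschitz behaviour of the standard normal CDF and choosing the bucket widths accordingly. Summing the two Berry-Essen errors, the two-moment discretization error, and stitching the two regimes consistently at the variance threshold must all fit inside the allotted $\Theta(\epsilon/m)$ per block, so that, after the later combination across blocks via Corollary~\ref{coro:folklore-sequence}, the global error remains $O(\epsilon)$.
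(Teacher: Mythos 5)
Your proposal is correct and follows essentially the same route as the paper: a dichotomy on how much probability mass the optimal small-item set carries, with Markov's inequality plus a distribution-matching dynamic program (Lemma~\ref{lemma:dp}) in the light regime, and Berry--Essen plus a two-moment-matching knapsack DP (Lemmas~\ref{lemma:transform}--\ref{lemma:small-2}) in the heavy regime. The only cosmetic difference is that you threshold on the variance $\sum_i p_i(1-p_i)$ at $\Theta(m^2/\epsilon^2)$ while the paper thresholds on the mean $\sum_i p_i$ at $(m/\epsilon)^4$; since small items satisfy $1-p_i\ge\epsilon^2$, the two quantities are polynomially equivalent and both splits make the same argument go through.
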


The proof of this lemma needs a sequence of results.

\begin{lemma}\label{lemma:dp}
	For any $\zeta$, there exists an algorithm that runs in $(mn/\epsilon)^{O(\zeta)}$ time and returns a solution $I\cap S_{\ell}$ such that $|I\cap S_{\ell}|= |T\cap S_\ell|$, $Q_{I\cap S_{\ell}}\le Q_{T\cap S_{\ell}}$ and $\Pr(P_{I\cap S_{\ell}}=h)\le \Pr(P_{T\cap S_{\ell}}=h)+2\epsilon/(mn)$ for every $0\le h\le \zeta-1$.
\end{lemma}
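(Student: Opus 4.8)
The plan is to solve this by dynamic programming over the items of $S_\ell$, where the DP state records not only the cardinality of the partial selection but also a rounded version of the \emph{profit distribution} of the selected items, truncated to its first $\zeta$ coefficients. Write $D_h^X = \Pr(P_X = h)$ for the profit distribution of a subset $X$. The decisive structural fact is that when a single item of probability $p$ is appended to a partial selection, the distribution obeys the linear recurrence $D_h^{new} = (1-p)D_h^{old} + p\,D_{h-1}^{old}$, so each low-order coefficient $D_h$ with $h\le \zeta-1$ depends only on coefficients of index $\le h$. Consequently, truncating the distribution to its first $\zeta$ coordinates is lossless for the quantities we care about, and the vector $(D_0,\dots,D_{\zeta-1})$ can be maintained incrementally.

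First I would fix a grid granularity $\gamma = \Theta(\epsilon/(mn^2))$ and define the table entry $\mathrm{DP}[t][c][\vec d\,]$ to be the minimum total size of a subset of the first $t$ items of $S_\ell$ that selects exactly $c$ of them and whose truncated profit distribution, \emph{rounded coordinatewise to multiples of $\gamma$ after every insertion}, equals $\vec d \in (\gamma\mathbb{Z}\cap[0,1])^{\zeta}$. The transition for item $t{+}1$ is the usual skip-or-take: skipping copies the entry forward, while taking applies the Bernoulli convolution above, re-rounds the $\zeta$ coordinates, increments $c$, and adds $q_{t+1}$ to the stored size (keeping the minimum). Since there are $n$ choices of $t$, at most $n$ choices of $c$, and $(1/\gamma+1)^{\zeta} = (mn/\epsilon)^{O(\zeta)}$ admissible rounded vectors, the table has $(mn/\epsilon)^{O(\zeta)}$ entries and each transition costs $O(\zeta)$, which gives the claimed running time.

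The correctness argument reads off the entry indexed by the cardinality and rounded distribution produced when the items of the unknown set $T\cap S_\ell$ are fed through this same recurrence; because the algorithm enumerates all $(c,\vec d\,)$ pairs present in the table, this entry is available. Since $T\cap S_\ell$ is itself a witness reaching that state, the stored minimum size is at most $Q_{T\cap S_\ell}$, and the returned subset $I\cap S_\ell$ has exactly $|T\cap S_\ell|$ items; this yields the first two conclusions.

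The main obstacle, and the final step, is the error analysis controlling how the per-insertion rounding accumulates. Let $e_h^{(t)}$ denote the gap between the DP's rounded coefficient and the true coefficient after $t$ insertions. The recurrence gives $e_h^{(t)} \le (1-p_t)e_h^{(t-1)} + p_t\,e_{h-1}^{(t-1)} + \gamma/2$, and since the first two terms form a convex combination, the coordinatewise maximum $E^{(t)} = \max_h e_h^{(t)}$ satisfies $E^{(t)} \le E^{(t-1)} + \gamma/2$, hence $E^{(n)} \le n\gamma/2$. Applying this bound both to $T\cap S_\ell$ (true versus rounded distribution) and to the returned $I\cap S_\ell$ (which reaches the same rounded state) and adding the two gaps gives $|\Pr(P_{I\cap S_\ell}=h) - \Pr(P_{T\cap S_\ell}=h)| \le n\gamma \le 2\epsilon/(mn)$ for every $0\le h\le \zeta-1$, which is exactly the third conclusion. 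The one thing to watch is that the errors accumulate additively rather than multiplicatively; the convex-combination structure of the convolution is precisely what prevents a blow-up.
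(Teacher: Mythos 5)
Your proposal is correct and follows essentially the same route as the paper's proof: a dynamic program over the items of $S_\ell$ whose state records the cardinality together with a grid-rounded, $\zeta$-truncated profit distribution, whose value is the minimum total size, and whose error analysis exploits the convex-combination form of the Bernoulli convolution to show the per-step rounding error of $\Theta(\epsilon/(mn^2))$ accumulates only additively to $O(\epsilon/(mn))$. The only cosmetic difference is that the paper rounds the coordinates \emph{up} (so the stored vector is a one-sided overestimate of the true distribution of any witness, including the returned set $I\cap S_\ell$), whereas you round to nearest and apply the triangle inequality through the common final state; both yield the stated bound.
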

\begin{proof}[{\bf Proof of Lemma \ref{lemma:dp}}]
	We design an algorithm based on dynamic programming. Let $\eta=\epsilon/(mn^2)$. Although we do not know the value of $\Pr(P_{T\cap S_{\ell}}=h)$, we know that this value lies in $[0,1]$. Therefore, we can guess, via $(mn/\epsilon)^{O(\zeta)}$ enumerations, the $\zeta$ values $t_0,t_1,\cdots,t_{\zeta-1}$ such that $t_h-\epsilon/(mn)\le \Pr(P_{T\cap S_{\ell}}=h)< t_h$. In the following we provide an algorithm that returns $I\cap S_{\ell}$ such that $\Pr(P_{I\cap S_{\ell}}=h)\le t_h+\epsilon/(mn)$, and {Lemma~\ref{lemma:dp} follows.}
	
	{Let us define $\Gamma_2=\{0,\eta,2\eta,\ldots,\eta\cdot 1/\eta\}$ as the set of rounded probabilities.}
	Let us call a $(\zeta+2)$-vector $(j,\varpi,u_0,u_1,u_2,\ldots,u_{\zeta-1})$ a {\em state}, where $j\in\{0,1,\ldots,n'\}$, $\varpi\in\{0,1,\cdots,n'\}$ and $u_j\in \Gamma_2$. Each state is associated with a positive value $f(j,\varpi,u_0,u_1,\ldots,u_{\zeta-1})$, which can be calculated recursively as shown in the next paragraph. Intuitively, a state means that a subset $U\subseteq \{1,2,\ldots,j\}$ of items can be selected such that $\Pr(P_U=j)$ is approximately $u_j$, $|U|=\varpi$ and $f(j,\varpi,u_0,u_1,\ldots,u_{\zeta-1})$ is the minimal total size of items among all possible subsets $U$. In particular, if such a subset $U$ does not exist, then $f(j,\varpi,u_0,u_1,\ldots,u_{\zeta-1})=\infty$.
	
	\smallskip
	
	Now we define the {calculation} of $f(j,\varpi,u_0,u_1,\ldots,u_{\zeta-1})$. For this purpose, we first define the summation of state $(j,\varpi,u_0,u_1,\ldots,u_{\zeta-1})$ and random variable $P_{j+1}$ as follows:
	\begin{eqnarray}\label{eq:dp}
	(j,\varpi,u_0,u_1,\ldots,u_{\zeta-1})+P_{j+1}\nonumber\\=(j+1,\varpi+1,\tilde{u}_0',\tilde{u}_1',\ldots,\tilde{u}_{\zeta-1}'),
	\end{eqnarray}
	where 
	$\tilde{u}_j'$ is the nearest value in $\Gamma_2$ when rounding {\em up} $u_j'$ with
	$u_0'=u_0(1-p_{j+1})$ and $u_{j}'=u_j(1-p_{h+1})+u_{j-1}p_h$ for $1\le j\le \zeta-1$.

	Initially, for $j=0$, we define
	\begin{equation}
	\label{eq:recursive-definition-h-1}
	f(0,0,0,\ldots,0)=0
	\end{equation}
	and  { for $(u_0,u_1,\ldots,u_{\zeta-1})\neq (0,0,\ldots,0)$, we define}
	\begin{equation}
	\label{eq:recursive-definition-h-2}
	f(0,\varpi,u_0,u_1,\ldots,u_{\zeta-1})=\infty.
	\end{equation}

	For $h\ge 0$, we define
	\begin{eqnarray*}
		g(j+1,\varpi,u_0,u_1,\ldots,u_{\zeta-1})=\\
		q_{j+1}+\min\{f(j,\varpi',u_0',u_1',\ldots,u_{\zeta-1}'):\\
		(j,\varpi',u_0',u_1',\ldots,u_{\zeta-1}')+P_{j+1}=(j+1,\varpi,u_0,u_1,\ldots,u_{\zeta-1}) \},
	\end{eqnarray*}
	and define
	\begin{eqnarray}\label{eq:recursive-1}
	f(j+1,\varpi,u_0,u_1,\ldots,u_{\zeta-1})=\nonumber\\
	\min\{f(j,\varpi,u_0,u_1,\ldots,u_{\zeta-1}),g(j+1,\varpi,u_0,u_1,\ldots,u_{\zeta-1})\}.
	\end{eqnarray}
	
	Observe that we can use Eqs. \eqref{eq:recursive-definition-h-1}-\eqref{eq:recursive-1} to recursively calculate the value associated to any state. Since the total number of states is bounded {from above} by $|\Gamma_2|^{O(\zeta)}=(mn/\epsilon)^{O(\zeta)}$, the calculation can be done in polynomial time. {In the following we show that, among all of the states $(n,\varpi,u_0,u_1,\ldots,u_{\zeta-1})$ that satisfy $f(n,\varpi,u_0,u_1,\ldots,u_{\zeta-1})\le Q$, there exists some state $(n,\varpi^*,u_0^*,u_1^*,\ldots,u_{\zeta-1}^*)$ such that $\varpi^*=|T\cap S_\ell|$ and $u_h^*\le t_h+\epsilon/(mn)$. Denote the set of items selected in the corresponding solution by ${I\cap S_{\ell}}$, then $I\cap S_{\ell}$ satisfies Lemma~\ref{lemma:dp}}.
	
	Consider the optimal solution ${T\cap S_{\ell}}$. Let $T_j={T\cap S_{\ell}}\cap\{1,2,\ldots,j\}$ and $u_i(T_j)=\Pr(P_{T_j}=i)$. We make the following claim.
	
	\begin{claim}\label{claim:1}
		For any $0\le j\le n'$, there exists a vector $(j,\varpi,u_0,u_1,\ldots,u_{\zeta-1})$ such that
		\begin{itemize}
			\item $\varpi= |T_j|$
			\item $f(j,\varpi,u_0,u_1,\ldots,u_{\zeta-1})\le Q_{T_j}$, and
			\item $u_i\le u_i(T_j)+j\eta$ for every $0\le i\le \zeta-1$.
		\end{itemize}
	\end{claim}
	
	\begin{proof}
		We prove the claim by induction. It is trivial to see that the claim holds when $j=0$. Suppose the claim holds for $j\le s$.
		That is, for $j=s$, there exists some state $(s,\varpi,u_0,u_1,\ldots,u_{\zeta-1})$ such that $u_i\le u_i(T_s)+s\eta$ for every $0\le i\le \zeta-1$ and $f(s,\varpi,u_0,u_1,\ldots,u_{\zeta-1})\le Q_{T_s}$.
		
		Now we prove it holds for $j=s+1$. There are two cases: $s+1\not\in T_{s+1}$ and $s+1\in T_{s+1}$.
		
		In the case $s+1\not\in T_{s+1}$, we have $Q_{T_{s+1}}=Q_{T_s}$ and $u_i(T_{s+1})=u_i(T_s)$. According to Equation~(\ref{eq:recursive-1}), we have
		\begin{eqnarray*}
			&&f(s+1,\varpi,u_0,u_1,\ldots,u_{\zeta-1})\\&\le& f(s,\varpi,u_0,u_1,\ldots,u_{\zeta-1})\le Q_{T_s}=Q_{T_{s+1}},
		\end{eqnarray*}
		hence the claim holds.
		
		In the case $s+1\in T_{s+1}$, we have
		\begin{eqnarray}        
		&&\hspace{-11mm}u_0(T_{s+1})=u_0(T_{s})p_{s+1},\label{eqn:abc-1}\\
		&&\hspace{-11mm} u_i(T_{s+1})=u_i(T_{s})(1-p_{s+1})+u_{i-1}(T_{s})p_{s+1}, 1\le i\le \zeta-1.\label{eqn:abc-2}
		\end{eqnarray}
		Compare Eqs. \eqref{eqn:abc-1}-\eqref{eqn:abc-2} with (\ref{eq:recursive-1}), we see that if
		$$(s+1,\varpi',u_0',u_1',\ldots,u_{\zeta-1}')=(s,\varpi,u_0,u_1,\ldots,u_{\zeta-1})+P_{s+1},$$ then
		$$u_0'\le u_0(1-p_{s+1})+\eta\le u_0(T_{s+1})+(s+1)\eta,$$
		and
		\begin{eqnarray*}
			u_i'&\le& u_i(1-p_{s+1})+u_{i-1}p_{s+1}+\eta\\
			&\le& u_{i}(T_s)(1-p_{s+1})+u_{i-1}(T_{s})p_{s+1}+(s+1)\eta\\&=&u_i(T_{s+1})+(s+1)\eta.
		\end{eqnarray*}
		Furthermore, we have
		\begin{eqnarray*}
			&&f(s+1,\varpi',u_0',u_1',\ldots,u_{\zeta-1}')\\&\le& f(s,\varpi,u_0,u_1,\ldots,u_{\zeta-1})+q_{s+1}\le Q_{T_{s+1}}.
		\end{eqnarray*}
		Hence, {\bf Claim~\ref{claim:1}} holds.
	\end{proof}
	
	{\bf Claim \ref{claim:1}} says that there exists a state $(n,\varpi^*,u_0^*,u_1^*,\ldots,u_{\zeta-1}^*)$ such that $f(n,\varpi^*,u_0^*,u_1^*,\ldots,u_{\zeta-1}^*)\le Q$, $\varpi^*=|T\cap S_\ell|$ and $u_i^*\le u_i(T_{n'})+n'\eta\le u_i(T\cap S_{\ell})+\epsilon/(mn)$. Since in the recursive calculation we always overestimate (by rounding {\em up}) the probabilities, we have
	$$\Pr(P_{I\cap S_{\ell}}=h)\le u_h^*\le \Pr(P_{T\cap S_{\ell}}=h)+\epsilon/(mn)\le t_h+\epsilon/(mn).$$
	This completes the proof of {\bf Lemma \ref{lemma:dp}}.
\end{proof}

\begin{definition}
	For any subset $D$ of small items and integer $h\ge 0$, we define
	$$\hat{h}_{D}=\frac{h-\sum_{i\in {D}} p_i}{\sqrt{\sum_{i\in {D}}\sigma_i^2}}=\frac{h-\sum_{i\in {D}} p_i}{\sqrt{\sum_{i\in {D}}p_i(1-p_i)}}.$$
\end{definition}

The proofs of the following two lemmas mainly consist of mathematical calculations together with a suitable application of Berry-Essen's theorem.
\begin{lemma}
	\label{lemma:transform}
	If
	$$\sum_{i\in {I\cap S_{\ell}}} p_i>(m/\epsilon)^4 ~~\text{and}~~ |\Phi(\hat{h}_{I\cap S_{\ell}})-\Phi(\hat{h}_{T\cap S_{\ell}})|\le O(\epsilon/m),$$
	then
	$$\Pr\left(\sum_{i\in {I\cap S_{\ell}}}P_i\ge h\right)\ge \Pr\left(\sum_{i\in {T\cap S_{\ell}}}P_i\ge h\right)-\Omega(\epsilon/m),$$
	where $\Phi(x)$ is the cumulative distribution function of the standard normal distribution.
\end{lemma}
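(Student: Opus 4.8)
The plan is to approximate each overflow probability $\Pr(\sum_{i\in D}P_i\ge h)$, for $D\in\{I\cap S_\ell,\,T\cap S_\ell\}$, by the Gaussian tail $1-\Phi(\hat h_D)$ via Berry-Essen's theorem, and then to feed the second hypothesis $|\Phi(\hat h_{I\cap S_\ell})-\Phi(\hat h_{T\cap S_\ell})|\le O(\epsilon/m)$ in at the end to conclude that the two tails are close. Concretely, writing $\mu_D=\sum_{i\in D}p_i$ and $\sigma_D^2=\sum_{i\in D}\sigma_i^2=\sum_{i\in D}p_i(1-p_i)$, I would set $Z_i=P_i-p_i$ and $S_D=(\sum_{i\in D}Z_i)/\sigma_D$, so that $\Pr(\sum_{i\in D}P_i\ge h)=\Pr(S_D\ge \hat h_D)=1-\Pr(S_D<\hat h_D)$. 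Applying Theorem~\ref{theorem:Berry-Essen} to the mean-zero variables $Z_i$ and taking the left limit at $\hat h_D$ of the cumulative distribution function of $S_D$ (which is legitimate by continuity of $\Phi$) yields $|\Pr(\sum_{i\in D}P_i\ge h)-(1-\Phi(\hat h_D))|\le C_0\psi_0(D)$, where $\psi_0(D)=(\sum_{i\in D}\sigma_i^2)^{-3/2}\sum_{i\in D}\rho_i$. This passage sidesteps the integrality of $\sum_{i\in D}P_i$ cleanly: since $\hat h_D$ is exactly the standardized value of the integer $h$, the event $\{S_D<\hat h_D\}$ coincides with $\{\sum_{i\in D}P_i\le h-1\}$, and the left-limit argument avoids any continuity correction.

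The crux is to bound $\psi_0(D)$ by $O(\epsilon/m)$. First I would observe that for a Bernoulli variable $\rho_i=\EE|P_i-p_i|^3=p_i(1-p_i)\big[(1-p_i)^2+p_i^2\big]\le p_i(1-p_i)=\sigma_i^2$, so $\sum_{i\in D}\rho_i\le\sum_{i\in D}\sigma_i^2$ and hence $\psi_0(D)\le(\sum_{i\in D}\sigma_i^2)^{-1/2}=1/\sigma_D$. Next, the definition of a \emph{small} item, namely $p_i\le 1-\epsilon^2$ (so $1-p_i\ge\epsilon^2$), gives $\sigma_i^2=p_i(1-p_i)\ge \epsilon^2 p_i$, whence $\sigma_D^2\ge \epsilon^2\mu_D$. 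Plugging in the first hypothesis $\mu_{I\cap S_\ell}>(m/\epsilon)^4$ gives $\sigma_{I\cap S_\ell}^2> m^4/\epsilon^2$ and therefore $\psi_0(I\cap S_\ell)\le 1/\sigma_{I\cap S_\ell}<\epsilon/m^2=O(\epsilon/m)$, with room to spare.

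To finish I would assemble the pieces. Granting that the Berry-Essen error is $O(\epsilon/m)$ for both sets, the two instances of Step~1 read $\Pr(\sum_{i\in I\cap S_\ell}P_i\ge h)\ge 1-\Phi(\hat h_{I\cap S_\ell})-O(\epsilon/m)$ and $\Pr(\sum_{i\in T\cap S_\ell}P_i\ge h)\le 1-\Phi(\hat h_{T\cap S_\ell})+O(\epsilon/m)$. Subtracting and invoking the hypothesis $|\Phi(\hat h_{I\cap S_\ell})-\Phi(\hat h_{T\cap S_\ell})|\le O(\epsilon/m)$ gives $\Pr(\sum_{i\in I\cap S_\ell}P_i\ge h)-\Pr(\sum_{i\in T\cap S_\ell}P_i\ge h)\ge \Phi(\hat h_{T\cap S_\ell})-\Phi(\hat h_{I\cap S_\ell})-O(\epsilon/m)\ge -\Omega(\epsilon/m)$, which is exactly the claim.

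The main obstacle I anticipate is not the algebra of the final step but the controlling of the Berry-Essen error for the \emph{second} set: the hypothesis only constrains $\mu_{I\cap S_\ell}$, so to bound $\psi_0(T\cap S_\ell)$ one must either import from the enclosing scenario (the case ``the optimal solution contains a large volume of small items'') that $\mu_{T\cap S_\ell}>(m/\epsilon)^4$ as well, or argue separately that when $\mu_{T\cap S_\ell}$ is small both tails sit simultaneously near the same extreme (so their difference still stays $O(\epsilon/m)$). A secondary care point is making the reduction from $\Pr(\ge h)$ to $1-\Phi(\hat h_D)$ fully rigorous in the face of the discreteness of $\sum_{i\in D}P_i$, which is precisely what forces the left-limit/continuity argument rather than a naive evaluation of the cdf at $\hat h_D$.
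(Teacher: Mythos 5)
Your proposal is correct and follows essentially the same route as the paper: standardize via $X_i=P_i-p_i$, apply Berry--Essen to bound $|\Pr(\sum_{i\in D}P_i\ge h)-(1-\Phi(\hat h_D))|$ by $C_0(\sum_{i\in D}\sigma_i^2)^{-1/2}$ using $\rho_i\le\sigma_i^2$, lower-bound $\sum\sigma_i^2$ via $1-p_i\ge\epsilon^2$ and the hypothesis on $\sum p_i$, and then combine with the closeness of the two $\Phi$ values. The obstacle you flag at the end is real but benign: the paper's own final inequality also silently uses the Berry--Essen bound for $T\cap S_\ell$, which is justified because the enclosing Scenario~2 of Lemma~\ref{lemma:small} supplies $\sum_{i\in T\cap S_\ell}p_i>(m/\epsilon)^4$, exactly as you proposed importing.
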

\begin{proof}
	We define random variable $X_i=P_i-\EE(P_i)=P_i-p_i$, then $\EE(X_i)=0$,
	\begin{eqnarray*}
		&&\sigma_i^2=\EE(X_i^2)=(1-p_i)^2p_i+p_i^2(1-p_i)=p_i(1-p_i),\\
		&&\rho_i=\EE(|X_i|^3)
		=p_i(1-p_i)[p_i^2+(1-p_i)^2].
	\end{eqnarray*}
	We have
	\begin{eqnarray*}
		\Pr\left(\sum_{i\in {I\cap S_{\ell}}}P_i\ge h \right)&=&\Pr\left(\frac{\sum_{i\in {I\cap S_{\ell}}}X_i}{\sqrt{\sum_{i\in {I\cap S_{\ell}}}\sigma_i^2}}\ge \hat{h}_{I\cap S_{\ell}}\right).
	\end{eqnarray*}    
	According to Berry-Essen's theorem~\eqref{eq:Berry-Essen-theorem}, we have
	$$\left|\Pr\left(\sum_{i\in {I\cap S_{\ell}}}P_i\ge h \right)-\left(1-\Phi(\hat{h}_{I\cap S_{\ell}})\right)\right|
	\le C_0\cdot \frac{\sum_{i\in {I\cap S_{\ell}}}\rho_i}{(\sum_{i\in {I\cap S_{\ell}}}\sigma_i^2)^{3/2}}.$$
	{By plugging in $\rho_i$ and $\sigma_i$}, we have
	\[\Scale[0.9]{$$\left|\Pr\left(\sum_{i\in {I\cap S_{\ell}}}P_i\ge h \right)-\left(1-\Phi(\hat{h}_{I\cap S_{\ell}})\right)\right|
	\le C_0\cdot\frac{1}{\sqrt{\sum_{i\in {I\cap S_{\ell}}}p_i(1-p_i)}}.$$}\]
	For small items, it holds that $1-p_i\ge \epsilon^2$. Since $\sum_{i\in {I\cap S_{\ell}}}p_i>(m/\epsilon)^4$, we have
	$$\left|\Pr\left(\sum_{i\in {I\cap S_{\ell}}}P_i\ge h\right)-\left(1-\Phi(\hat{h}_{I\cap S_{\ell}})\right)\right| \le C_0\epsilon/m.$$
	Hence, the probability $\Pr\left(\sum_{i\in {I\cap S_{\ell}}}P_i\ge h\right)$ can be estimated using the standard normal distribution $\Phi(\hat{h}_{I\cap S_{\ell}})$. More specifically, if we can select $I\cap S_{\ell}$ such that
	$$|\Phi(\hat{h}_{I\cap S_{\ell}})-\Phi(\hat{h}_{T\cap S_{\ell}})|\le O(\epsilon/m)$$
	for every $0\le h\le k$, then it holds that
	\begin{eqnarray*}
		\Pr\left(\sum_{i\in {I\cap S_{\ell}}}P_i\ge h\right)&\ge& \left(1-\Phi(\hat{h}_{I\cap S_{\ell}})\right)- C_0\epsilon/m\\&=&\left(1-\Phi(\hat{h}_{T\cap S_{\ell}})\right)-\Omega(\epsilon/m)\\&\ge& \Pr\left(\sum_{i\in {T\cap S_{\ell}}}P_i\ge h\right)-\Omega(\epsilon/m).
	\end{eqnarray*}
	
	This completes the proof of {\bf Lemma \ref{lemma:transform}}.
\end{proof}

\begin{lemma}\label{lemma:estimate}
	If
	$$\sum_{i\in T\cap S_{\ell}}p_i> (m/\epsilon)^4$$ and the following holds for some $I\cap S_{\ell}$:
	\begin{itemize}
		\item     $|\sum_{i\in {I\cap S_{\ell}}}p_i-\sum_{i\in {T\cap S_{\ell}}}p_i|\le O(\epsilon/m)$, and
		\item $|\sum_{i\in {I\cap S_{\ell}}}p_i(1-p_i)-\sum_{i\in {T\cap S_{\ell}}}p_i(1-p_i)|\le O(\epsilon/m)$,
	\end{itemize}
	then
	$$|\Phi(\hat{h}_{I\cap S_{\ell}})-\Phi(\hat{h}_{T\cap S_{\ell}})|\le O(\epsilon/m)$$ for every $0\le h\le k$.
\end{lemma}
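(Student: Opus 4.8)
The plan is to exploit that the standard normal CDF $\Phi$ is globally Lipschitz (its derivative is the density, bounded by $1/\sqrt{2\pi}<1$), so $|\Phi(\hat{h}_{I\cap S_{\ell}})-\Phi(\hat{h}_{T\cap S_{\ell}})|\le|\hat{h}_{I\cap S_{\ell}}-\hat{h}_{T\cap S_{\ell}}|$ whenever the standardized arguments lie in a bounded window, while outside that window I fall back on the flatness of $\Phi$ in its tails. Throughout write $\mu_I=\sum_{i\in I\cap S_{\ell}}p_i$ and $v_I=\sum_{i\in I\cap S_{\ell}}p_i(1-p_i)$, and define $\mu_T,v_T$ analogously, so that $\hat{h}_{I\cap S_{\ell}}=(h-\mu_I)/\sqrt{v_I}$ and $\hat{h}_{T\cap S_{\ell}}=(h-\mu_T)/\sqrt{v_T}$; the two hypotheses then read $|\mu_I-\mu_T|\le O(\epsilon/m)$ and $|v_I-v_T|\le O(\epsilon/m)$.

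The first step is a variance lower bound. Since every small item satisfies $1-p_i\ge\epsilon^2$, we have $v_T=\sum p_i(1-p_i)\ge\epsilon^2\sum p_i=\epsilon^2\mu_T>\epsilon^2(m/\epsilon)^4=m^4/\epsilon^2$, hence $\sqrt{v_T}>m^2/\epsilon$; because $|v_I-v_T|\le O(\epsilon/m)\ll m^4/\epsilon^2$, the bound $\sqrt{v_I}\gtrsim m^2/\epsilon$ holds as well. The second step is the algebraic identity
\begin{equation*}
\hat{h}_{I\cap S_{\ell}}-\hat{h}_{T\cap S_{\ell}}=\frac{\mu_T-\mu_I}{\sqrt{v_I}}+\hat{h}_{T\cap S_{\ell}}\left(\sqrt{\frac{v_T}{v_I}}-1\right).
\end{equation*}
The first summand is at most $O(\epsilon/m)/(m^2/\epsilon)=O(\epsilon^2/m^3)$. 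For the factor $\sqrt{v_T/v_I}-1$, I use $|\sqrt{v_T}-\sqrt{v_I}|=|v_T-v_I|/(\sqrt{v_T}+\sqrt{v_I})\le O(\epsilon/m)/(m^2/\epsilon)=O(\epsilon^2/m^3)$, and dividing by $\sqrt{v_I}\gtrsim m^2/\epsilon$ gives $|\sqrt{v_T/v_I}-1|\le O(\epsilon^3/m^5)$.

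A case split on the magnitude of $\hat{h}_{T\cap S_{\ell}}$ then finishes the argument with threshold $L:=m^4/\epsilon^2$. If $|\hat{h}_{T\cap S_{\ell}}|\le L$, the identity yields $|\hat{h}_{I\cap S_{\ell}}-\hat{h}_{T\cap S_{\ell}}|\le O(\epsilon^2/m^3)+L\cdot O(\epsilon^3/m^5)=O(\epsilon/m)$, and Lipschitzness of $\Phi$ gives the claim. If $|\hat{h}_{T\cap S_{\ell}}|>L$, then the near-unit rescaling $\sqrt{v_T/v_I}=1\pm O(\epsilon^3/m^5)$ together with the tiny additive shift keeps $\hat{h}_{I\cap S_{\ell}}$ on the same side of $0$ with $|\hat{h}_{I\cap S_{\ell}}|\ge L/2$, so both $\Phi$-values sit within the Gaussian tail $1-\Phi(L)\le e^{-L^2/2}$ (respectively $\Phi(-L)\le e^{-L^2/2}$), which is far below $\epsilon/m$; hence their difference is $O(\epsilon/m)$ as well. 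Since the bound is uniform in $h$, it holds in particular for every $0\le h\le k$.

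The main obstacle is exactly this last case. Because $h$ ranges up to $k\le r+1$ while $\mu_T$ is only guaranteed to exceed the constant $(m/\epsilon)^4$, the quantity $|h-\mu_T|$, equivalently $|\hat{h}_{T\cap S_{\ell}}|$, can be as large as $\Theta(r)$, so a naive Lipschitz bound on $|\hat{h}_{I\cap S_{\ell}}-\hat{h}_{T\cap S_{\ell}}|$ is by itself insufficient; it is the tail argument that rescues the large-$h$ regime. The delicate point there is verifying that the multiplicative perturbation $\sqrt{v_T/v_I}$, being only within $1\pm O(\epsilon^3/m^5)$ of unity, cannot push $\hat{h}_{I\cap S_{\ell}}$ across the origin nor shrink its magnitude below the tail threshold $L$, which is precisely where the variance lower bound $\sqrt{v_I},\sqrt{v_T}\gtrsim m^2/\epsilon$ (and hence the smallness of the perturbation relative to $L$) is used.
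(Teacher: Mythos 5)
Your proof is correct, and it uses the same initial decomposition as the paper: write $\hat{h}_{I\cap S_{\ell}}$ as a multiplicative rescaling of $\hat{h}_{T\cap S_{\ell}}$ plus a small additive shift, with both perturbations controlled via the variance lower bound $\sum_{i\in T\cap S_\ell}p_i(1-p_i)\ge \epsilon^2\sum_{i\in T\cap S_\ell}p_i>m^4/\epsilon^2$ that follows from $1-p_i\ge\epsilon^2$ for small items. Where you diverge is in how the perturbed argument is transferred through $\Phi$. The paper proves a uniform claim (its Claim on normal distributions): for every $x\in(-\infty,\infty)$, $|\Phi((1+\delta)x\pm\delta)-\Phi(x)|\le 2\delta$, established by the one-line estimate $e^{-t^2/2}\le 1/t$, which makes the integral $\int_x^{(1+\delta)x}\frac{1}{\sqrt{2\pi}}e^{-t^2/2}\,dt$ at most $\delta$ regardless of how large $x$ is. This lets the paper get away with the crude bound $|\sqrt{v_I/v_T}-1|\le O(\epsilon/m)$ on the rescaling factor, since no multiplication by the magnitude of $\hat{h}_{T\cap S_\ell}$ ever occurs. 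You instead split on whether $|\hat{h}_{T\cap S_{\ell}}|$ exceeds a threshold $L=m^4/\epsilon^2$, using Lipschitzness of $\Phi$ inside the window and Gaussian tail decay outside it; this is equally valid but forces you to extract the sharper estimate $|\sqrt{v_T/v_I}-1|\le O(\epsilon^3/m^5)$ so that the product with $L$ stays $O(\epsilon/m)$, and to verify that the rescaling cannot drag the argument back across the tail threshold --- both of which you do correctly. The paper's uniform claim is the slicker route (it collapses your case analysis into a single inequality), while your version is more elementary in that it only invokes the boundedness of the normal density and standard tail bounds.
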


\begin{proof}
	{Note that} $\left|\sum_{i\in {I\cap S_{\ell}}}p_i(1-p_i)-\sum_{i\in {T\cap S_{\ell}}}p_i(1-p_i)\right|\le \epsilon/m$ and  $\sum_{i\in T\cap S_{\ell}}p_i> (m/\epsilon)^4$. We define
	\begin{eqnarray*}
		\beta_1&=&\sqrt{\frac{\sum_{i\in {I\cap S_{\ell}}}p_i(1-p_i)}{\sum_{i\in {T\cap S_{\ell}}}p_i(1-p_i)}}\in [1-\epsilon/m,1+\epsilon/m],\\
		\beta_2&=&\sum_{i\in {I\cap S_{\ell}}}p_i-\sum_{i\in {T\cap S_{\ell}}}p_i\in [-\epsilon/m,\epsilon/m].
	\end{eqnarray*}
	Then, we have
	\begin{eqnarray*}
		\hat{h}_{I\cap S_{\ell}}=\frac{h-\sum_{i\in {I\cap S_{\ell}}}p_i}{\sqrt{\sum_{i\in I\cap S_{\ell}}p_i(1-p_i)}}&=&\frac{1}{\beta_1}\cdot \frac{h-\sum_{i\in {T\cap S_{\ell}}}p_i-\beta_2}{\sqrt{\sum_{i\in T\cap S_{\ell}}p_i(1-p_i)}}\\
		&=& \frac{1}{\beta_1}\cdot \hat{h}_{T\cap S_{\ell}}+O(\epsilon/m).
	\end{eqnarray*}
	Using the result of Claim~\ref{obs:normal} below, we obtain 
	$$|\Phi(\hat{h}_{I\cap S_{\ell}})-\Phi(\hat{h}_{T\cap S_{\ell}})|\le O(\epsilon/m),$$
	meaning Lemma~\ref{lemma:estimate} holds.
	
	\begin{claim}\label{obs:normal}
		For any $x\in(-\infty,\infty)$ and $\delta>0$, it holds that $|\Phi\left((1+\delta)x\pm\delta\right)-\Phi(x)|\le 2\delta$.
	\end{claim}
	
	\begin{proof}
		We observe that for any $y\in (-\infty,\infty)$, it holds that
		$$|\Phi(y\pm\delta)-\Phi(y)|=\left|\int_y^{y\pm\delta}\frac{1}{\sqrt{2\pi}}e^{-t^2/2}dt\right|\le \left|\int_y^{y\pm\delta} 1 \,dt\right|=\delta.$$
		Now we show
		$$|\Phi((1+\delta)x)-\Phi(x)|=\left|\int_{x}^{x+\delta x}\frac{1}{\sqrt{2\pi}}e^{-t^2/2}dt\right|\le \delta.$$
		We observe that because of the symmetry { in $x\in(-\infty,\infty)$}, it suffices to prove the above inequality for $x\ge 0$. In this case, we have $e^{-t^2/2}\le 1/t$ for $t\ge 0$. (This is because the derivative of $te^{-t^2/2}$ is $e^{t^2/2}(1-t^2)$, and consequently its maximum value is $ 1/\sqrt{e}\le 1$.)
		Therefore, we have
		$$\left|\int_{x}^{x+\delta x}\frac{1}{\sqrt{2\pi}}e^{-t^2/2}dt\right|\le \delta x\cdot 1/x=\delta$$
		for $x\ge 0$. Hence, {\bf Claim~\ref{obs:normal}} holds.
	\end{proof}
	
	This completes the proof of {\bf Lemma~\ref{lemma:estimate}}.
\end{proof}

Now we can replace the condition of $\Pr(P_{I\cap S_\ell}\ge h)\ge \Pr(P_{T\cap S_\ell}\ge h)-\Theta(\epsilon/m)$ in {\bf Lemma~\ref{lemma:small}} with the conditions in {\bf Lemma~\ref{lemma:estimate}}, leading to the following lemma whose proof is based on a dynamic programming approach similar to that of {\bf Lemma~\ref{lemma:dp}}.

\begin{lemma}\label{lemma:small-2}
	If $\sum_{i\in T\cap S_{\ell}}p_i> (m/\epsilon)^4$, then there exists an algorithm that runs in $O(m^2n^5/\epsilon^2)$ time and returns a feasible solution {with item set} $I\cap S_{\ell}$ such that
	\begin{itemize}
		\item $Q_{I\cap S_{\ell}}\le Q_{T\cap S_{\ell}}$, and
		\item $|I\cap S_\ell|=|T\cap S_\ell|$, and
		\item     $|\sum_{i\in {I\cap S_{\ell}}}p_i-\sum_{i\in {T\cap S_{\ell}}}p_i|\le O(\epsilon/m)$, and
		\item $|\sum_{i\in {I\cap S_{\ell}}}p_i(1-p_i)-\sum_{i\in {T\cap S_{\ell}}}p_i(1-p_i)|\le O(\epsilon/m)$.
	\end{itemize}
\end{lemma}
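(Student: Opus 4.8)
The plan is to exploit the fact that Lemmas~\ref{lemma:transform} and~\ref{lemma:estimate} have already reduced the hard part. Under the hypothesis $\sum_{i\in T\cap S_\ell}p_i>(m/\epsilon)^4$, we no longer need to match the full distribution $\Pr(P_{I\cap S_\ell}=h)$ across all $h$ (as in Lemma~\ref{lemma:dp}); it suffices to select a subset $I\cap S_\ell$ of the small items in block $\ell$ that has the same cardinality as $T\cap S_\ell$, total size at most $Q_{T\cap S_\ell}$, and first two moments $\sum p_i$ and $\sum p_i(1-p_i)$ matching those of $T\cap S_\ell$ within $O(\epsilon/m)$. Thus the target is a size-minimizing, cardinality-constrained selection hitting two deterministic moment targets approximately, i.e.\ a two-constraint knapsack problem; unlike the $\zeta$-dimensional problem of Lemma~\ref{lemma:dp}, this is polynomial. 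I will solve it by a dynamic program that directly parallels the one in Lemma~\ref{lemma:dp}.

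First I would set up the DP with states $(j,\varpi,s_1,s_2)$, where $j$ indexes the current item (processing $S_\ell$ in a fixed order), $\varpi$ is the number of items selected so far, and $s_1,s_2$ are rounded running values of $\sum p_i$ and $\sum p_i(1-p_i)$ over the selected items. The stored value $f(j,\varpi,s_1,s_2)$ is the minimum total size of a subset of the first $j$ items realizing that state (and $\infty$ if none exists), and the recursion either skips or includes item $j+1$, updating $\varpi$, $s_1$, $s_2$ and adding $q_{j+1}$, exactly as in Lemma~\ref{lemma:dp}. The crucial observation for the running time is that the two moments do not need the same precision: the mean $s_1$ must be tracked to additive accuracy $O(\epsilon/m)$ (hence a fine granularity on a range of size $O(n)$), whereas the variance $s_2$ enters $\hat{h}$ only through $\beta_1=\sqrt{\mathrm{var}_I/\mathrm{var}_T}$ and, because the hypothesis forces $\mathrm{var}_T=\sum_{i\in T\cap S_\ell}p_i(1-p_i)\ge \epsilon^2\sum_{i\in T\cap S_\ell}p_i>m^4/\epsilon^2$, only a \emph{relative} accuracy $O(\epsilon/m)$ is required of it. Choosing the two granularities accordingly keeps the number of states polynomial and yields the claimed $O(m^2n^5/\epsilon^2)$ running time.

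Next I would establish correctness by induction on $j$, mirroring Claim~\ref{claim:1}. Writing $T_j=(T\cap S_\ell)\cap\{1,\dots,j\}$, I would show there is a reachable state with $\varpi=|T_j|$, with $s_1$ and $s_2$ overestimating $\sum_{i\in T_j}p_i$ and $\sum_{i\in T_j}p_i(1-p_i)$ by at most the accumulated rounding error, and with $f\le Q_{T_j}$. Taking $j=n$ produces a reachable state whose cardinality equals $|T\cap S_\ell|$, whose two moments lie within $O(\epsilon/m)$ of those of $T\cap S_\ell$, and whose stored size is $\le Q_{T\cap S_\ell}\le Q$; outputting the associated item set gives an $I\cap S_\ell$ satisfying all four bullets of Lemma~\ref{lemma:small-2}. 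As in Lemma~\ref{lemma:dp}, we never need to know $T\cap S_\ell$ itself: the DP enumerates every reachable state, and the induction merely certifies that the desired one is among them.

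The main obstacle is controlling the accumulation of rounding error in the induction step while keeping the state space polynomial, which is precisely where the large-variance hypothesis is spent. For the mean, I must verify that rounding up by the chosen granularity at each inclusion keeps $s_1$ within $O(\epsilon/m)$ of the true partial mean over all $n$ steps; for the variance, I must instead argue that the accumulated absolute error, though possibly large, is negligible \emph{relative} to $\mathrm{var}_T>m^4/\epsilon^2$, so that $\beta_1\in[1-O(\epsilon/m),1+O(\epsilon/m)]$ as needed by Lemma~\ref{lemma:estimate}. Getting both error budgets to hold simultaneously, together with the size bound $f\le Q_{T_j}$, under a single consistent choice of granularities that still fits the stated polynomial running time, is the quantitative crux of the argument.
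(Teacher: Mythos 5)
The paper gives no actual proof of Lemma~\ref{lemma:small-2} --- it says only that the lemma follows by ``a dynamic programming approach similar to that of Lemma~\ref{lemma:dp}'' --- and your proposal is precisely that program: states $(j,\varpi,s_1,s_2)$ recording cardinality and the two rounded moment sums, with $f$ storing the minimum total size and a Claim~\ref{claim:1}-style induction certifying that a state shadowing $T\cap S_\ell$ is reachable, so your route coincides with the intended one and is correct. The one caveat is that tracking $s_2$ only to \emph{relative} accuracy $O(\epsilon/m)$ establishes a relative version of the fourth bullet rather than the absolute $O(\epsilon/m)$ bound as literally stated; that relative version does suffice for the lemma's only use (controlling $\beta_1$ in Lemma~\ref{lemma:estimate}, where the hypothesis $\sum_{i\in T\cap S_\ell}p_i>(m/\epsilon)^4$ makes the variance sum large), and if you want the literal statement you can simply use the same additive granularity $\epsilon/(mn)$ for $s_2$ as for $s_1$, at the cost of an extra polynomial factor that is immaterial to Theorem~\ref{thm:fpt-alg}.
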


\begin{proof}[{\bf Proof of Lemma~\ref{lemma:small}}]
	Without loss of generality we assume $S_\ell=\{1,2,\cdots,n'\}$. Recall that $S_\ell$ consists of small items. For any small item $v_i$ we have $\epsilon^2\le 1-p_i\le 1$.
	We prove {\bf Lemma~\ref{lemma:small}} by considering the following two scenarios:
	\begin{itemize}
		\item {\bf Scenario 1:} $\sum_{i\in T\cap S_{\ell}}p_i\le (m/\epsilon)^4$.
		\item {\bf Scenario 2:} $\sum_{i\in T\cap S_{\ell}}p_i> (m/\epsilon)^4$.
	\end{itemize}
	In {\bf Scenario 1},
	we observe that by Markov's inequality Eq.\eqref{eq:markov-inequality}, we know
	$$\Pr(P_{T\cap S_{\ell}}\ge h)\le \epsilon/m ~~\text{for}~~ h\ge (m/\epsilon)^5.$$
	Let $\zeta=(m/\epsilon)^5$. 
	{\bf Lemma~\ref{lemma:dp}} showed that we can find a subset $I\cap S_{\ell}$ of items in polynomial time such that
	$$\Pr(P_{I\cap S_{\ell}}=h)\le \Pr(P_{T\cap S_{\ell}}=h)+2\epsilon/(mn)$$
	holds for every $0\le h\le \zeta-1$. Then
	$$\Pr(P_{I\cap S_{\ell}}\ge h)\ge \Pr(P_{T\cap S_{\ell}}\ge h)-2\epsilon/m$$
	for every $0\le h\le \zeta-1$.
	Since
	$$\Pr(P_{I\cap S_{\ell}}\ge h)\ge 0\ge \Pr(P_{T\cap S_{\ell}}\ge h)-2\epsilon/m$$
	for $h\ge \zeta$, we find a near-optimal solution $I\cap S_{\ell}$ in polynomial time. Hence,
	{\bf Lemma~\ref{lemma:small}} holds in {\bf Scenario 1}. 
	
	In {\bf Scenario 2}, we have $\sum_{i\in T\cap S_{\ell}}p_i> (m/\epsilon)^4$.
	As highlighted before, the difficulty encountered here is to {maximize the probability with respect to the sum of random variables}.
	Our strategy is to first replace the condition
	\begin{eqnarray}\label{eq:condi}
	\Pr(P_{I\cap S_\ell}\ge h)\ge \Pr(P_{T\cap S_\ell}\ge h)-\Theta(\epsilon/m)
	\end{eqnarray}
	in {\bf Lemma~\ref{lemma:small}} with a stronger, but handier condition. More precisely, by {\bf Lemma~\ref{lemma:transform}} and {\bf Lemma~\ref{lemma:estimate}}, we show that Eq~\eqref{eq:condi} is true if we have $|\sum_{i\in {I\cap S_{\ell}}}p_i-\sum_{i\in {T\cap S_{\ell}}}p_i|\le O(\epsilon/m)$, and $|\sum_{i\in {I\cap S_{\ell}}}p_i(1-p_i)-\sum_{i\in {T\cap S_{\ell}}}p_i(1-p_i)|\le O(\epsilon/m)$, plus some cardinality constraints. Note that these knapsack-like constraints are much easier to handle when compared with Eq~\eqref{eq:condi}. We will design a dynamic programming based algorithm that finds a feasible solution with respect to these stronger but handier conditions ({\bf Lemma~\ref{lemma:small-2}}).
	Thus, {\bf Lemma~\ref{lemma:small}} holds in {\bf Scenario 2}.
\end{proof}

Now we are ready to prove {\bf Theorem \ref{thm:fpt-alg}}.
\begin{proof}[{\bf Proof of Theorem \ref{thm:fpt-alg}}]
	Let $T$ be the set of {\em indices} of items that are selected by the optimal solution to the MKU problem,
	and $\OPT_{MKU}=\Pr\left(\sum_{i\in T} P_i\ge k\right)$ be the optimal objective value given by the optimal solution.
	For any $I$ (i.e., the indices of the items that are selected by an approximation algorithm), we define $P_I=\sum_{i\in I} P_i$ and  $Q_I=\sum_{i\in I}q_i$.
	Denote by $S$ the set of {\em indices} of small items and $B$ the set of {\em indices} of big items. Let $B_j=\{i|i\in B, v_i\in V_j\}$ and $S_j=\{i|i\in S, v_i\in V_j\}$.
	According to the number of big items selected by the optimal solution in each $V_j$, namely $|T\cap B_j|$, we divide the MKU problem into the following two cases:
	\begin{itemize}
		\item {\bf Case 1}: There exists some $1\le j^*\le m$ such that $|T\cap B_{j^*}|\ge 2k$.
		\item {\bf Case 2}: $|T\cap B_{j}|\le 2k-1$ for every $1\le j\le m$.
	\end{itemize}
	In {\bf Case 1}, as $1\le j^*\le m$, we can guess $j^*$
	by $O(m)$ enumerations. { When the guess of $j^*$ is correct, {\bf Theorem \ref{thm:fpt-alg}} is proven as {\bf Lemma~\ref{obs:1}}.}
	
	In {\bf Case 2}, we have $|T\cap B_j|\le 2k-1$ for every $j$. We first guess the values of $|T\cap B_j|$ and $|T\cap S_j|$ for all $j$, leading to $n^{O(m)}$ enumerations. For the correct guess,
	{\bf Corollary~\ref{coro:folklore-sequence}} says that a near optimal solution $I$ can be found when the following conditions are satisfied simultaneously:
	\begin{itemize}
		\item $|I\cap B_j|=|T\cap B_j|$ and $|I\cap S_{j}|= T\cap S_j|$.
		\item $Q_{I\cap B_j}\le Q_{T\cap B_j}$ and $Q_{I\cap S_j}\le Q_{T\cap S_j} $.
		\item For $\delta=\Theta(\epsilon/m)$ and any $0\le h\le k$, we have \begin{subequations}
			\begin{eqnarray}
			\Pr(P_{I\cap S_j}\ge h)&\ge& (1-\delta)\Pr(P_{T\cap S_j}\ge h)-\delta,\label{eq:1}\\
			\Pr(P_{I\cap B_j}\ge h)&\ge& (1-\delta)\Pr(P_{T\cap B_j}\ge h)-\delta.\label{eq:2}
			\end{eqnarray}
		\end{subequations}
	\end{itemize}
	This means that we can decompose the MKU problem in {\bf Case 2} into a sequence of sub-problems, each of which asks for a near optimal solution $I\cap B_j$ or $I\cap S_j$. 
	Let $1\le \ell\le m$ be an arbitrary index. Then, {\bf Theorem \ref{thm:fpt-alg}} in {\bf Case 2} is proven as {\bf Lemma \ref{lemma:fpt-big}} (dealing with big items) and {\bf Lemma \ref{lemma:small}} (dealing with small items).
	
\end{proof}

\section{Conclusion and Discussion}
\label{sec:discussion}

We have introduced the BVU problem with the uncertainty that the vote of a bribed voter may not be counted (either because the bribed voter does not cast its vote in fear of being caught, or because the bribed voter is indeed caught and its vote is discarded).
We have showed that the BVU problem does not admit any {\em multiplicative} $O(1)$-approximation algorithm in FPT time 
{modulo} standard complexity assumptions.
We have also showed that there is an algorithm that returns an approximate solution with an {\em additive}-$\epsilon$ error in FPT time for any arbitrary small $\epsilon$. Given the hardness result, this algorithm is perhaps the best one can hope for.

The BVU problem has many interesting aspects that deserve further studies.
First, our {algorithmic result} assumes a constant number of candidates. Future work needs to characterize the hardness of, and design approximate algorithms for, the BVU problem when the number of candidates is part of the input (rather than a constant). The problem with an arbitrary number of candidates may be strictly harder than that of a constant number of candidates. It is not clear whether or not our approximation algorithm, which works for a constant number of candidates, can be extended to cope with the case of an arbitrary number of candidates.  
Moreover, the hardness of the BVU problem (with both a constant and an arbitrary number of candidates) needs to be investigated with respect to other voting rules, such as the {\em k-approval} or {\em Borda} 
rule. Nevertheless, our hardness result with $m=2$ candidates immediately implies a hardness result 
with respect to the {\em Borda} voting rule.

Furthermore, the notion of {\em uncertainty} is a rich topic that needs to be explored further.
Even for the particular kind of uncertainty introduced in the present paper, there are many problems that deserve to be studied.
For example, it is interesting to incorporate the {\em probabilistic no-show} introduced by \cite{wojtas2012possible} into our model such that {\em unbribed} voters have some probabilities of no-show (i.e., not casting their votes); of course, the reason that an unbribed voter may not cast its vote is different from the afore-discussed reason that the vote of a bribed voter may not be counted. Another outstanding future work is to consider the probability that a voter accepts a bribe. Moreover, the literature focuses on the setting where the costs of bribery are deterministic. However, such a cost usually is only an estimation because it is private to a voter. Therefore, it is perhaps more reasonable to assume that the probability that a voter takes a bribe depends on the price offered by the briber.

\smallskip

\noindent{\bf Acknowledgement}. We thank the anonymous reviewers for their constructive comments. Research was supported in part by NSF 1756014.

\bibliographystyle{IEEETranN}
\bibliography{ref}

\end{document}